\documentclass[runningheads,a4paper]{llncs}
\usepackage{url}
\usepackage{epsf}
\usepackage{graphicx}
\usepackage{amsfonts}
\usepackage{amssymb}
\usepackage{amsmath}
\usepackage{latexsym}
\usepackage{color}
\usepackage{multirow}
\usepackage{setspace, float}
\usepackage{refcount}
\usepackage{hyperref}
\usepackage{makeidx}  % allows for indexgeneration

\usepackage{thmtools,thm-restate}

%\figref
%\autoref

%\renewcommand{\sectionautorefname}{Section}
%\renewcommand{\algorithmautorefname}{Algorithm}
%\newcommand{\corautorefname}{Corollary}
%\newcommand{\propertyautorefname}{Property}
%\newcommand{\observationautorefname}{Observation}
%\newcommand{\lemmaautorefname}{Lemma}

\newcommand{\arr}[1]{\ensuremath{\protect\overrightarrow{#1}}}

%\doublespacing

%\addtolength{\textwidth}{1.2in}
%\addtolength{\evensidemargin}{-0.9in}
%\addtolength{\oddsidemargin}{-0.9in}
%\addtolength{\textheight}{1.4in}
%\addtolength{\topmargin}{-0.7in}

%\topmargin 0in              % LaTeX adds an inch
%\headheight 0pt                 % no headers
%\headsep 0in                    % no headers
%\textheight 9in              % i.e., bottom margin = 1in

%\textwidth 6.5in
%\oddsidemargin 0in              % (8.5-textwidth)/2 = 1in
                                % LaTeX adds an inch

\newcommand{\ABox}{
\raisebox{3pt}{\framebox[6pt]{\rule{6pt}{0pt}}}
}

\newcommand{\pp}{\xi}
\newcommand{\red}[1]{\textcolor{red}{#1}}
\newcommand{\ang}{\measuredangle}

\urldef{\mailmd}\path|mirela.damian@villanova.edu|
\urldef{\mailvv}\path|dvoicu@villanova.edu|

\begin{document}

\mainmatter  % start of an individual contribution

\title{Spanning Properties of Theta-Theta Graphs\thanks{This work was supported by NSF grant CCF-1218814.}}
\titlerunning{Spanning Properties of Theta-Theta Graphs}

\author{Mirela Damian and Dumitru V. Voicu}
\authorrunning{M. Damian and D. V. Voicu}

\institute{Department of Computer Science \\ Villanova University, Villanova, PA 19085 \\
\mailmd\\
\mailvv}

\date{}

\maketitle

\begin{abstract}
We study the spanning properties of Theta-Theta graphs. Similar in spirit with the Yao-Yao graphs, Theta-Theta graphs partition the space around each vertex into a set of $k$ cones, for some fixed integer $k > 1$, and select at most one edge per cone. The difference is in the way edges are selected. Yao-Yao graphs select an edge of minimum length, whereas Theta-Theta graphs select an edge of minimum orthogonal projection onto the cone bisector. It has been established that the Yao-Yao graphs with parameter $k = 6k'$ have spanning ratio $11.67$, for $k' \ge 6$. In this paper we establish a first spanning ratio of $7.82$ for Theta-Theta graphs, for the same values of $k$. We also extend the class of Theta-Theta spanners with parameter $6k'$, and establish a spanning ratio of $16.76$ for $k' \ge 5$. We surmise that these stronger results are mainly due to a tighter analysis in this paper, rather than Theta-Theta being superior to Yao-Yao as a spanner. We also show that the spanning ratio of Theta-Theta graphs decreases to $4.64$ as $k'$ increases to $8$.  These are the first results on the spanning properties of Theta-Theta graphs. 
\keywords{Yao graph, Theta graph, Yao-Yao, Theta-Theta, spanner}
\end{abstract}

\section{Introduction}
Let $S$ be a set of $n$ points in the plane, and let $G$ be an undirected plane graph with vertex set $S$. The \emph{length} of a path in $G$ is the sum of the Euclidean lengths of its constituent edges. The distance in $G$ between any two points $a, b \in S$ is the length of a shortest path between $a$ and $b$. We say that $G$ is a \emph{spanner} if it preserves distances between each pair of points in $S$, up to a given factor. Specifically, for a fixed integer $t \ge 1$, we say that $G$ is a $t$-\emph{spanner} if any two points $a, b \in S$ at distance $|ab|$ in the plane are at distance at most $t\cdot |ab|$ in $G$. The smallest integer $t$ for which this property holds is called the \emph{spanning ratio} of $G$. Clearly there is a tradeoff between the spanning ratio and the sparsity of $G$: the smaller the spanning ratio, the denser the spanner and the better the approximation of the original distances. 
 %Hence, the detour when following this path in the graph is not too long. 

One way to control the tradeoff between the spanning ratio and the sparsity of the spanner is to partition the space around each point into equiangular cones of angle $\theta=2\pi/k$, for some integer $k \ge 1$, and connect each point to a ``nearest'' point in each cone. Intuitively, this construction promises a short detour between any two points $a, b \in S$, by following the edge from $a$ aiming in the direction of $b$ (the one lying in the cone with apex $a$ containing $b$). 
%Two points $a$ and $b$ are \emph{neighbors} in a graph $G$ is the edge $ab$ is present in $G$.
%
The definition of a ``nearest'' point comes in two flavors, in the context of \emph{Yao} graphs~\cite{Yao82} and \emph{Theta}-graphs (or $\Theta$-graphs)~\cite{Clark87,Keil88}. For Yao graphs, the ``nearest'' point is simply the point that minimizes the $L_2$-distance, whereas for Theta graphs, the ``nearest'' point in a cone $C$ is the point whose orthogonal projection onto the bisector of $C$ minimizes the $L_2$-distance.  Both Yao and Theta graphs are parameterized by a positive integer $k \ge 1$, which controls the cone angle $\theta = 2\pi/k$. % angular aperture $\theta = 2\pi/k$ of the cones. 
In the following we will refer to the Yao graph as $Y_k$ and Theta graphs as $\Theta_k$, for a fixed $k \ge 1$.  Both $Y_k$ and $\Theta_k$ are known to be efficient spanners, for $k \ge 6$. The spanning ratios of these graphs %for various values of $k$ 
are summarized in~\autoref{tab:spanningratios}.
%
%For $k > 6$, the spanning ratio of $\Theta_k$ and $Y_k$ is at most $1/(1-\sin(\theta/2))$. 
%It can be verified that the spanning ratio is approximately 7.562 for $k = 7$, and it drops down to under $2$ for $k \ge 13$. 

%\vspace{-1em}
\begin{table}
\begin{center}
%\begin{small}
\begin{tabular} {|c|c|c|c|c|c|}
\hline
\multirow{2}{*}{Parameter $k$} & \multicolumn{5}{|c|}{Spanning Ratio}  \\
\cline{2-6}
& $Y_k$ & \multicolumn{2}{|c|}{$\Theta_k$} & $YY_k$ & $\Theta\Theta_k$ \\
\hline \hline
\small{$< 4$} & \multicolumn{5}{|c|}{\small{$\infty$~\cite{MollaThesis09}}} \\
\hline
\small{$4$} & \small{$696.1$~\cite{BDD+10}} & \multicolumn{2}{|c|}{\small{237~\cite{BarbaBD13}}} & \multicolumn{2}{|c|}{\small{$\infty$~\cite{DMP09}}} \\
\hline
\small{$5$} & \small{$3.74$~\cite{BarbaBD14}} & \multicolumn{2}{|c|}{\small{$9.96$~\cite{BoseMRS14}}} &  
\small{OPEN} & \small{$\infty$~\cite{KX13}} \\
%\multicolumn{2}{|c|}{\multirow{2}{*}{\small{OPEN}}} \\
%\multicolumn{2}{|c|}{\small{OPEN}} \\
%\cline{1-2}\cline{5-6}
%\small{odd $k \ge 5$} & $\frac{1}{1-2\sin(3\theta/8)}$~\small{\cite{BarbaBD14}} & \multicolumn{2}{|c|}{} & 
%\multicolumn{2}{|c|}{\small{OPEN}} \\ 
%\multicolumn{2}{|c|}{} \\
\hline
\small{$6$} & \small{5.8~\cite{BarbaBD14}}  & \multicolumn{2}{|c|}{\small{$2$~\cite{BGH+10}}} & \small{$\infty$~\cite{MollaThesis09}} & \small{OPEN} \\
\hline
\small{$k > 6$} & \multirow{3}{*}{$\frac{1}{1-2\sin(\theta/2)}$~\small{\cite{BDDx10}}} & \multicolumn{2}{|c|}{$\frac{1}{1-\sin(\theta/2)}$~\small{\cite{RS91}} }
& \raisebox{-0.3em}{\small{$11.67$~for}} & \raisebox{-0.3em}{\red{\small{$16.76$~for}}} \\
\cline{1-1} \cline{3-4}
\small{$4k+2$} & & \small{$1+2\sin(\theta/2)$} & \multirow{3}{*}{\small{\cite{BoseRV13,BoseCM+14}}} &  \small{$k=6k'$~and} & \red{\small{$k=6k'$~and}} \\
\cline{1-1} \cline{3-3}
\small{$4k+4$} & & $1+\frac{2\sin(\theta/2)}{\cos(\theta/2)-\sin(\theta/2)}$ & &  \raisebox{0.3em}{\small{$k' \ge 6$}} &  \raisebox{0.3em}{\red{\small{$k' \ge 5$}}}  \\
\cline{1-3}
\small{$4k+3, 4k+5$} & $\frac{1}{1-2\sin(3\theta/8)}$~\small{\cite{BarbaBD14}} & $\frac{\cos(\theta/4)}{\cos(\theta/2)-\sin(3\theta/4})$ & &  \raisebox{0.4em}{\small{\cite{DB13}}} & \raisebox{0.4em}{\small{\red{[HERE]}}} \\
\hline
\end{tabular}
%\end{small}
\end{center}
\caption{Spanning ratios of Yao and Theta graphs for various $\theta=2\pi/k$ values.}
\label{tab:spanningratios}
\end{table}

\vspace{-2em}
Interest in Yao and Theta graphs has increased with the advancement of wireless ad hoc networks and the need for efficient communication (see~\cite{RodittyMU08,Chris08,KanjPGe09,CowenLW00} %ThorupU01} 
and the references therein). Designing routing algorithms for wireless ad hoc networks is an extremely difficult 
task and research in this area is still in progress. The overlay communication graph formed by the wireless links should be a spanner to ensure fast delivery of information, and should also have low degree to ensure a low maintenance cost and reduced MAC-level contention and interference~\cite{HamR06}. We observe that both Yao and Theta graphs obey the first requirement (as detailed in   \autoref{tab:spanningratios}), but fail to satisfy the second requirement. One simple example consists of $n-1$ points equally distributed around a circle centered at an n$^{th}$ point $p$. Then, for $k \ge 6$, both $\Theta_k$ and $Y_k$ will have an edge directed from each of the $n-1$ points towards $p$, because $p$ is ``nearest'' in one of their cones. So each of $\Theta_k$ and $Y_k$ has out-degree $k$, but in-degree $n-1$. To reduce the in-degree, alternate spanner structures based on Yao and Theta graphs have been proposed, such as Yao-Yao~\cite{WL03}, Sink~\cite{LiWanWang01,AryaYY95}, Stable Roommates~\cite{BoseCCCKL13}, and Ordered-Yao~\cite{Song04}. 

The \emph{Yao-Yao} graph with integer parameter $k \ge 1$, denoted $YY_k$, is a subgraph of $Y_k$ obtained by applying a second Yao step to the set of incoming edges in each cone.  More precisely, for each point $p$ and each cone with apex $a$ containing two or more incoming edges, $YY_k$ retains only a shortest incoming edge and discards the rest. Ties are broken arbitrarily.  This construction guarantees a degree of at most $2k$ at each node in $YY_k$ (one incoming and one outgoing edge per cone), however the spanning property of $YY_k$ is still under investigation.  The only existing result shows that $YY_{6k'}$, for $k' \ge 6$, is a spanner with spanning ratio $11.67$. For $k' \ge 8$,  the spanning ratio of $YY_{6k'}$ drops to $4.75$~\cite{DB13}. 

\emph{Sink} spanners~\cite{LiWanWang01,AryaYY95} transform bounded outdegree spanners, such as $Y_k$ and $\Theta_k$, into bounded degree spanners, by replacing  each directed star consisting of all links directed into a point $p$ and lying in a cone with apex $p$, 
by a tree of bounded degree with ``sink'' $p$. The result is a spanner with degree at most $k(k+2)$ and spanning ratio $1/(1-2\sin(\theta/2))^2$. 

%In~\cite{BoseCCCKL13} Bose et al. define a new spanner, called the \emph{stable roommate} spanner, of degree 
The \emph{Stable Roommates} spanner introduced in~\cite{BoseCCCKL13} has degree 
at most $k$ and spanning ratio matching the spanning ratio of $Y_k$, so this spanner combines both qualities -- low spanning ratio and low degree -- of the Yao and Yao-Yao graphs, respectively. 
%The stable roommates approach processes pairs of points in non-decreasing order by their distances, and adds an edge $ab$ to the graph if both cones with apexes $a$ and $b$ containing $ab$ include no other graph edges. 
%The only drawback of this approach is the need for processing pairs of points in a  particular order, making it unsuitable for a fast local implementation. (The authors present a distributed implementation that requires $O(n)$ rounds of communication.) 
The only drawback of this approach is that it processes pairs of points in non-decreasing order by their distances, making it unsuitable for a fast local implementation. (The authors present a distributed implementation that requires $O(n)$ rounds of communication.) 

The \emph{ordered} Theta approach~\cite{BoseGM04} reduces the potentially linear degree of the Theta graph to a logarithmic degree.  
%Unlike the Theta graph approach, the ordered Theta approach determines nearest neighbors by projecting points onto the walls of cones, as opposed to central axes of cones. 
%Similar to the stable roommates approach, the ordered Theta approach imposes a particular ordering on the input points. Starting with an empty vertex set, the ordered Theta graph grows as follows.  When a new point is inserted, the point is connected to a previously processed ``nearest'' point (in terms of the projective distance)  in each cone. Thus the resulting graph depends on the insertion order. The authors show that careful insertion orders can produce graphs with spanning ratio $1/(\cos\theta-\sin\theta)$ and degree $O(k \log n)$. 
Similar to the stable roommates approach, the ordered Theta approach imposes a particular ordering on the input points. 
The authors show that careful orderings can produce graphs with spanning ratio $1/(\cos\theta-\sin\theta)$ and degree $O(k \log n)$.

%There is a vast literature on spanners that we will not attempt to review in any detail here because our intent is to focus on Theta spanners. We refer the reader to a number of survey papers and books for background material~\cite{PS89,AryaYY95,Epp00,ns-gsn-07}.

Similar in spirit with the Yao-Yao graph, in this paper we introduce the \emph{Theta-Theta} graph $\Theta\Theta_k$, parameterized by integer $k \ge 1$, and study the spanning properties of this graph.  The graph $\Theta\Theta_k$ is obtained by applying a filtering step to the edges of $\Theta_k$ as follows. For each point $p$ and each cone $C$ with apex $p$, we consider all edges in $C$ directed into $p$, and maintain only a ``shortest'' edge while discarding the rest. Recall that in the context of Theta graphs, a ``shortest'' edge minimizes the length of its projection on the cone bisector.  Ties are arbitrarily broken.

Our main result shows that $\Theta\Theta_{6k'}$ is a spanner, for any $k' \ge 5$. This result relies on a result by Bonichon et al.~\cite{BGH+10}, who prove that $\Theta_6$ is a $2$-spanner. Our main contribution is showing that  $\Theta\Theta_{6k'}$ contains a short path between the endpoints of each edge in $\Theta_6$. More precisely, we show that for each edge $ab \in \Theta_6$, there is a path between $a$ and $b$ in $\Theta\Theta_{6k'}$ no longer than $8.38|ab|$, for $k' \ge 5$. This, combined with the fact that $\Theta_6$ is a $2$-spanner, yields an upper bound of $16.76$ on the spanning ratio of $\Theta\Theta_{6k'}$. 
A similar approach has been used in~\cite{DB13} to establish that $YY_{6k'}$ has spanning ratio $11.67$, for $k' \ge 6$.  
We observe that the spanning ratio of $\Theta\Theta_{6k'}$ decreases to $7.82$, $5.63$ and $4.64$ as $k'$ increases to $6$, $7$, and above $8$, respectively. 
The spanning ratios established in this paper for $\Theta\Theta_{6k'}$ are stronger than the ones obtained in~\cite{DB13} for $YY_{6k'}$, for the same parameter values $k' \ge 6$. We surmise that this is mainly due to the tighter analysis in this paper, rather than $\Theta\Theta_{6k'}$ being superior to $YY_{6k'}$ as a spanner.

%This section presents our main result, showing that $\Theta\Theta_k$ is a spanner, provided that $k = 6k'$ and $k' \ge 4$ (and so $\theta \le \pi/12)$. In particular, we show that for each edge $ab \in \Theta_6$, there is a path in $\Theta\Theta_{6k'}$ no longer than $17.1|ab|$. This, combined with the result of~\autoref{thm:theta6}, yields our main result that $\Theta\Theta_{6k'}$ is a $34.2$-spanner, for $k' \ge 4$. The spanning ratio decreases to $5.51$ for $k' \ge 6$,  which is superior to the spanning ratio of $11.67$ established in~\cite{jDR12} for $YY_{6k'}$, with $k' \ge 6$. We also show that the spanning ratio of $\Theta\Theta_{6k'}$ drops under $4$ for $k' \ge 8$. 

%Let $k = 6k'$ be a positive integer, with $k' \ge 4$. For each edge $\arr{ab} \in \Theta_6$, there is a path $\pp_{\Theta\Theta_k}(a, b)$ in $\Theta\Theta_{k}$ of length $|\pp_{\Theta\Theta_k}(a, b)| \le t \cdot |ab|$, with $t$ values $17.1$, $4.17$, $2.76$ and $1.93$ corresponding to $k'$ values $4$, $5$, $6$, and above $7$, respectively.

%The rest of the paper is organized as follows. In \autoref{sec:notation} we introduce basic notation and definitions used throughout the paper. In \autoref{sec:basic} we present a few isolated lemmas that will be used in establishing the main result. In order to keep the flow of the presentation smooth, we defer the proofs of these lemmas to the appendix.
%Our main result is detailed in \autoref{sec:main}. We wrap up with some conclusions and future work in \autoref{sec:conclusions}.

\subsection{Definitions}
\label{sec:defs}
Throughout the paper, $S$ will refer to a fixed set of $n$ points in the plane. The directed Yao graph $Y_k$ with integer parameter $k \ge 1$ on $S$ is constructed as follows. For each point $a \in S$, starting with the direction of the positive $x$-axis, extend $k$ equally spaced rays $r_1, r_2, \ldots, r_k$ originating at $a$, in counterclockwise order (see \autoref{fig:defs}a for $k = 6$). These rays divide the plane into $k$ cones, denoted by $C_{\{k,1\}}(a), C_{\{k,2\}}(a), \ldots, C_{\{k,k\}}(a)$, each of angle $\theta = 2\pi/k$. To avoid overlapping boundaries, we assume that each cone is half-open and half-closed, meaning that $C_{\{k,i\}}(a)$ includes $r_i$ but excludes $r_{i+1}$ (here $r_{k+1} \equiv r_1$ wraps around).  
In each cone of $a$, draw a directed edge from $a$ to its ``closest'' point $b$ in that cone (the one that minimizes the $L_2$-distance $|ab|$). Ties are broken arbitrarily. These directed edges collectively form the edge set for the directed Yao graph. The undirected Yao graph (or simply Yao graph) on $S$ is obtained by simply ignoring the directions of these edges. The Theta graph $\Theta_k$ is defined in a similar way, with the only difference being in the definition of ``closest'': in each cone $C$ with apex $a$, draw a directed edge from $a$ to the point $b$ that minimizes the distance between $a$ and the orthogonal projection of $b$ on the bisector of the cone.  %(Note that $\|ab\| = \|ba\|$ is well defined.) 
%
%%%%%%%%%%%%%%%%%%%%%%%%%%%%%%%%%Figure Begin
\begin{figure}[pht]
\centering
\begin{tabular}{c@{\hspace{0.1\linewidth}}c}
\includegraphics[width=0.35\linewidth]{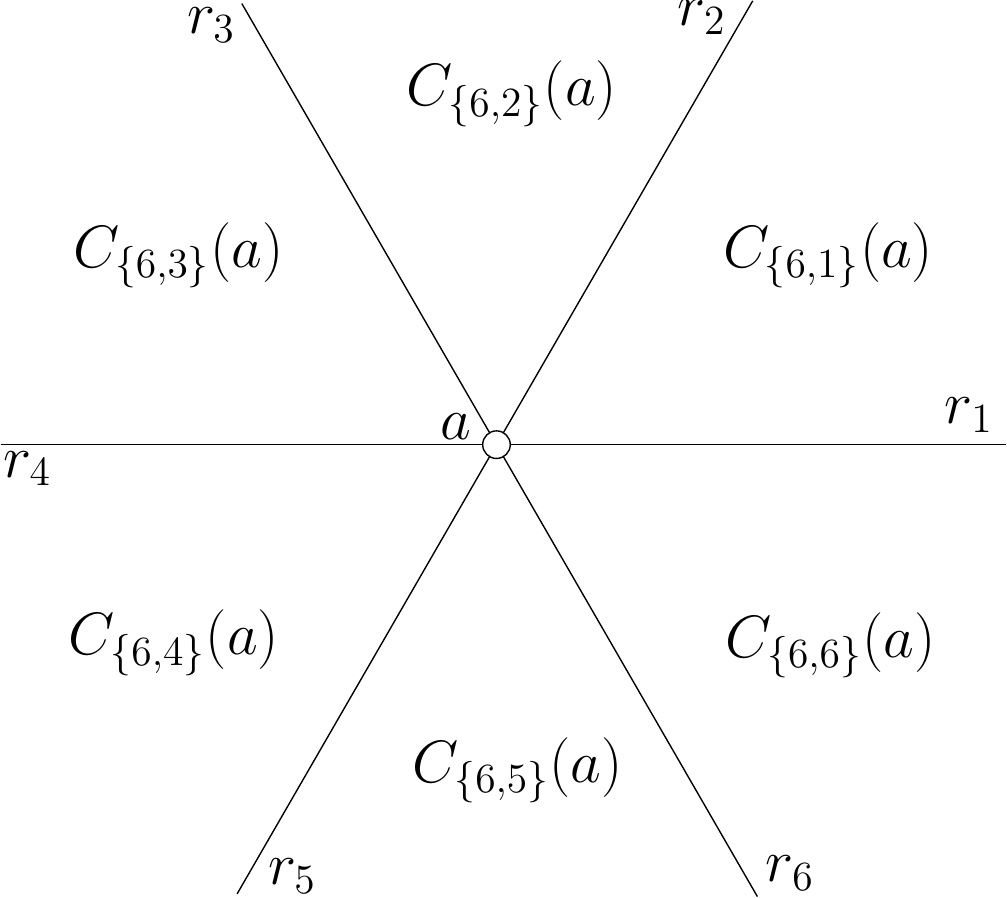} & 
\raisebox{2em}{\includegraphics[width=0.42\linewidth]{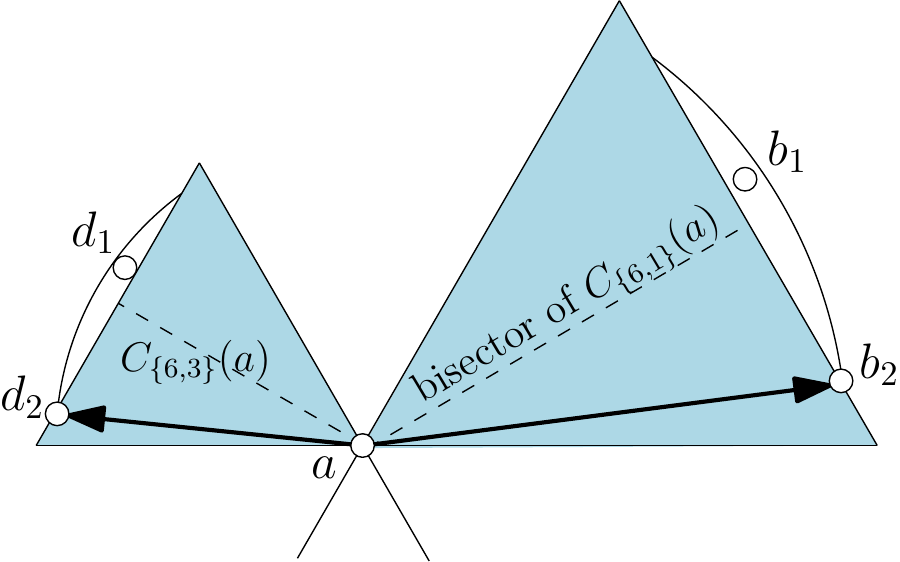}} \\
(a) & (b) 
\end{tabular}
\caption{Definitions (a) Rays defining the cones at point $a$ (b) Theta edges $ab_2$, $ad_2$.}
\label{fig:defs}
\end{figure}
%%%%%%%%%%%%%%%%%%%%%%%%%%%%%%%%%Figure End
%
For example, looking at the cone $C_{\{6,1\}}(a)$ in~\autoref{fig:defs}b, notice that $b_1$ minimizes the $L_2$-distance to $a$, whereas $b_2$ minimizes the $L_2$-distance between its projection onto the cone bisector and $a$. Consequently, $\arr{a b_1}$ will be added to $Y_6$, and $\arr{a b_2}$ to $\Theta_6$.  Similarly, $\arr{ad_1} \in C_{\{6,3\}}(a)$ will be added to $Y_6$, and $\arr{ad_2}$ to $\Theta_6$.
\autoref{fig:yaothetaex}a shows the Yao graph $Y_6$ for the point set depicted in \autoref{fig:defs}b, and  
\autoref{fig:yaothetaex}c shows the Theta graph $\Theta_6$ for the same point set.

The Yao-Yao graph $YY_k \subseteq Y_k$ is obtained from $Y_k$ by applying a reverse Yao step to the set of incoming Yao edges in $Y_k$. That is, for each node $a$  and each cone with apex $a$ containing two or more incoming edges, $YY_k$ retains a shortest incoming edge and discards the rest. Ties are broken arbitrarily.  The Theta-Theta graph $\Theta\Theta_k \subseteq \Theta_k$ is obtained from $\Theta_k$ in a similar way, with the only difference being in the requirement that a ``shortest'' incoming edge 
in a  cone minimizes the length of its projection onto the cone bisector.
\autoref{fig:yaothetaex}b shows the graph $YY_6$ derived from the graph $Y_6$ depicted in  \autoref{fig:yaothetaex}a, and  
\autoref{fig:yaothetaex}d shows the graph $\Theta\Theta_6$ derived from the graph $\Theta_6$ depicted in  \autoref{fig:yaothetaex}c. 

When the choice of a particular cone is either irrelevant or is clear from the context, we ignore the cone subscript and use $C_k(a)$ to denote any of the cones $C_{\{k,1\}}(a), C_{\{k,2\}}(a), \ldots C_{\{k,k\}}(a)$.  For any two points $a, b \in S$, let $C_k(a,b)$ denote the cone with apex $a$ that contains $b$.  Let $\triangle_k(a, b)$ be the canonical %open equilateral 
triangle with two of its sides along the rays bounding $C_{k}(a,b)$, and the third side orthogonal to the bisector of 
$C_{k}(a,b)$ and passing through $b$. For example, shaded in~\autoref{fig:defs}b are the canonical triangles $\triangle_6(a, b_2)$ and $\triangle_6(a, d_2)$. 

%\vspace{-1em}
%%%%%%%%%%%%%%%%%%%%%%%%%%%%%%%%%Figure Begin
\begin{figure}[htbp]
\centering
\begin{tabular}{c@{\hspace{0.1\linewidth}}c}
\includegraphics[width=0.4\linewidth]{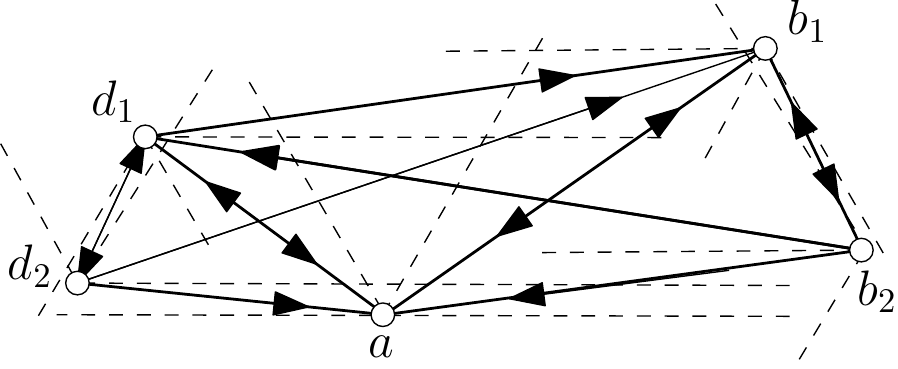} & 
\includegraphics[width=0.4\linewidth]{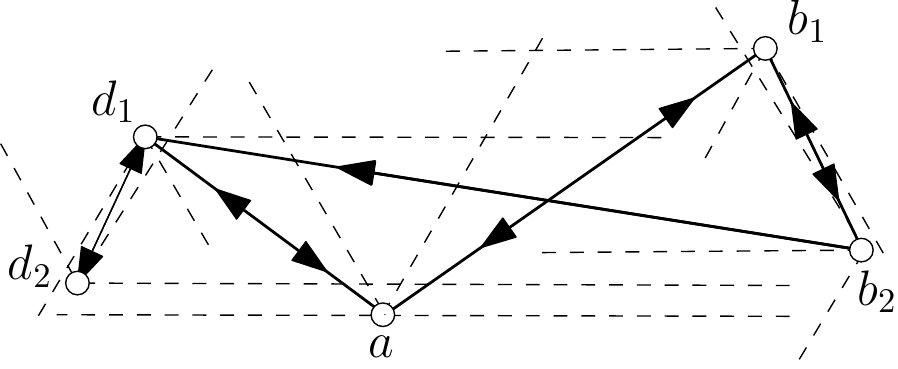} \\
(a) & (b) \\
\includegraphics[width=0.4\linewidth]{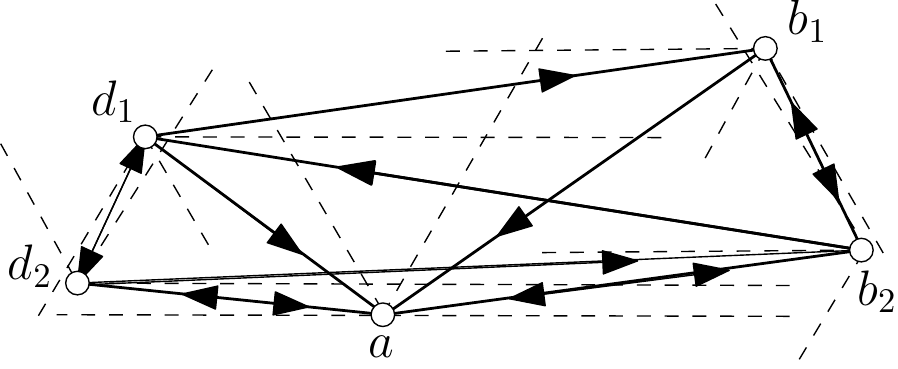} & 
\includegraphics[width=0.4\linewidth]{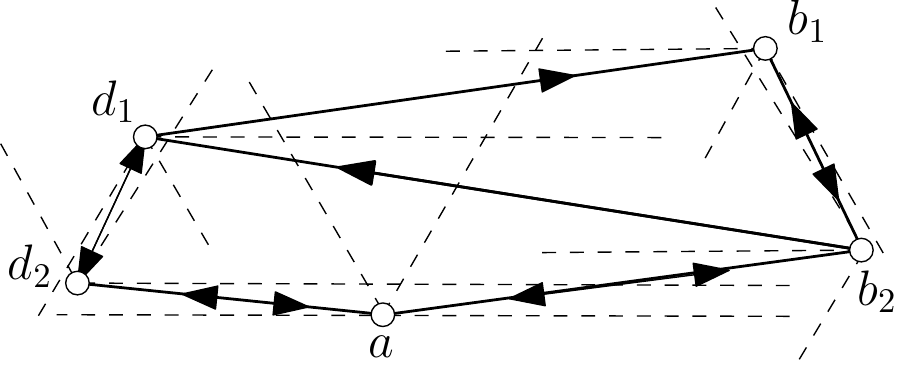} \\
(c) & (d) 
\end{tabular}
\caption{Graph examples (a) $Y_6$ (b) $YY_6$ (c) $\Theta_6$ (d) $\Theta\Theta_6$.}
\label{fig:yaothetaex}
\end{figure}
%%%%%%%%%%%%%%%%%%%%%%%%%%%%%%%%%Figure End

%\vspace{-1em}
For any pair of vertices $a$ and $b$ in an undirected graph $G$, let $\pp_G(a, b)$ denote a \emph{shortest} path %(of minimum length) 
in $G$ between $a$ and $b$. 
%To avoid confusion, we attach to the path notation one of the subscripts $\theta_k$ or $\theta\theta_k$ , depending on whether the path is in $\Theta_{k}$ or $\Theta\Theta_{k}$ respectively. 
For example, $\pp_{\Theta_6}(a, b)$ refers to a shortest path in $\Theta_6$ from $a$ to $b$. 
%We say that two edges \emph{intersect} if they share a common point. If the common point is not an endpoint, the edges \emph{cross} each other. 

%For a fixed integer $k' \ge 5$, 
Our main goal is to establish a short path in $\Theta\Theta_k$ between the endpoints of each edge in $\Theta_6$. Our arguments will rely on the assumption that, for each point $a \in S$, each cone $C_k(a)$ is entirely contained in $C_6(a)$, hence $k = 6k'$. 
Throughout the rest of the paper, will will work with a quadruple of distinct points $a, b, b', a' \in S$ in the following configuration: $\arr{ab}$ is an arbitrary edge in $\Theta_6$; $\arr{ab'}$ is the edge in $\Theta_k$ that lies in the cone $C_k(a, b) \subset C_6(a,b)$; and $\arr{a'b'}$ is the edge in $\Theta\Theta_k$ that lies in the cone $C_k(b',a) \subset C_6(b',a)$. We will refer to this configuration as a \emph{canonical $\Theta$-configuration}, to avoid repeating these definitions in different contexts.
For a snapshot of a canonical $\Theta$-configuration, see ahead to~\autoref{fig:abba}a. 
We will further assume, without loss of generality, that in a canonical $\Theta$-configuration $\arr{ab}$ lies in $C_{\{6,1\}}(a)$, and the bisector of $C_k(a, b)$ lies below, or aligns with, the bisector of $\triangle_6(a,b)$. Any other configuration is equivalent to this canonical $\Theta$-configuration under rotational and/or reflectional symmetry.  

\section{Preliminaries}
\label{sec:basic}
In this section we present a few isolated lemmas that will be used in our main proof from~\autoref{sec:main}. 
For the sake of clarity and continuity in the flow of our exposition, we defer the proofs of most of these lemmas to the appendix. 
%Due to space constraints, most of these lemmas are stated without proof. Please refer to~\cite{DV14} for complete proofs of these lemmas.
We encourage the reader to skip ahead to~\autoref{sec:main}, and refer back to these lemmas from the context of~\autoref{thm:maintheta}, where their role will become evident.
%
%\medskip
%\noindent
We begin this section with the statement of an existing result. %established by Bonichon et al. in~\cite{BGH+10}.

\begin{theorem}{\emph{\cite{BGH+10}}}
For any pair of points $a, b\in S$, there is a path in $\Theta_6$ whose total length is bounded above by $2|ab|$.
\label{thm:theta6}
\end{theorem}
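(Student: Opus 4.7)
The plan is to proceed by induction on the number of points of $S$ contained in the canonical triangle $\triangle_6(a,b)$. The crucial feature of $\Theta_6$ is that the cone angle is exactly $\pi/3$, so $\triangle_6(a,b)$ is equilateral, which is what makes the induction close. In the base case $\triangle_6(a,b)\cap S=\{a,b\}$, so $b$ is the only candidate in $C_6(a,b)$ and the edge $ab$ belongs to $\Theta_6$; the bound $|ab|\le 2|ab|$ is trivial.

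For the inductive step I would let $c$ be the point in $C_6(a,b)\cap S$ whose orthogonal projection onto the bisector of $C_6(a,b)$ is minimum. By definition $\arr{ac}\in \Theta_6$, and since $c$'s projection is no larger than $b$'s, $c$ lies inside $\triangle_6(a,b)$. The canonical triangle $\triangle_6(c,b)$ is then strictly contained in $\triangle_6(a,b)$ and omits $a$, so it contains strictly fewer points of $S$. The induction hypothesis supplies a $\Theta_6$-path from $c$ to $b$ of length at most $2|cb|$, and concatenating with $\arr{ac}$ yields a path from $a$ to $b$ in $\Theta_6$ of length at most $|ac|+2|cb|$.

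What remains is the purely geometric inequality $|ac|+2|cb|\le 2|ab|$ for every $c$ inside the equilateral canonical triangle $\triangle_6(a,b)$. I would prove it by placing $a$ at the origin with the bisector of $C_6(a,b)$ along the positive $x$-axis, writing $|ac|$ and $|cb|$ via the law of cosines, and observing that the left-hand side is a convex function of the coordinates of $c$, so its maximum over the compact region $\triangle_6(a,b)$ is attained on the boundary. A short case analysis on each of the three bounding segments, together with monotonicity along each, reduces the check to the three vertices of $\triangle_6(a,b)$, where the inequality is immediate (with equality in the degenerate case $c=a$).

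The main obstacle I expect is this last geometric inequality, since the coefficient $2$ multiplying $|cb|$ leaves no slack when $c$ is close to the side of $\triangle_6(a,b)$ opposite $a$; the equilateral structure is exactly what saves the argument. As an alternative route that bypasses the explicit calculation, one can recall that the half-$\Theta_6$ graph of $S$ coincides combinatorially with the TD-Delaunay (equivalently, $L_\infty$-Delaunay) triangulation, and invoke Chew's classical theorem that this triangulation is a $2$-spanner, together with the trivial observation that the half-$\Theta_6$ graph is a subgraph of $\Theta_6$.
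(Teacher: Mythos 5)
Your primary inductive argument has a genuine gap: the geometric inequality $|ac|+2|cb|\le 2|ab|$ is false for $c$ ranging over the equilateral canonical triangle $\triangle_6(a,b)$, and in particular it is not ``immediate at the three vertices.'' Let $s$ be the side length of $\triangle_6(a,b)$, let $b$ be the upper corner of the side opposite $a$ (so $|ab|=s$), and let $c$ be the lower corner of that side. Then $|ac|=|cb|=s$, so $|ac|+2|cb|=3s>2s=2|ab|$; moreover this $c$ is a legitimate candidate for the $\Theta_6$-neighbour of $a$, since both far corners have the same projection onto the cone bisector, and the failure persists on an open neighbourhood of that corner, so no tie-breaking or perturbation rescues it. The convexity reduction to the boundary is fine --- it is the vertex check that fails. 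This is not an artifact of your particular estimate: the ``follow the $\Theta$-edge in the cone containing $b$'' induction is exactly the Ruppert--Seidel argument, which yields spanning ratio $1/(1-2\sin(\theta/2))$ and therefore diverges at $\theta=\pi/3$, i.e.\ precisely at $k=6$. No local inequality of the form $|ac|+t\,|cb|\le t\,|ab|$ closes this induction for $\Theta_6$; that is why the $2$-spanner bound for $\Theta_6$ required a fundamentally different proof.

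Your fallback route is the correct one, and it is exactly what the paper does (the paper states the theorem as a citation to~\cite{BGH+10} and only sketches the ingredients): the half-$\Theta_6$ graph, consisting of the edges in alternate cones, is the Delaunay triangulation for the \emph{triangular} distance, and Chew's theorem~\cite{Chew89} that this triangulation is a $2$-spanner gives the claim since half-$\Theta_6\subseteq\Theta_6$. One correction there: TD-Delaunay is \emph{not} equivalent to the $L_\infty$-Delaunay triangulation --- the former is defined via homothets of an equilateral triangle, the latter via homothets of a square, and the $L_\infty$- (equivalently $L_1$-) Delaunay triangulation has tight spanning ratio strictly greater than $2$. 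Delete that parenthetical; with that fix the second route is sound, and the first should be dropped.
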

%The bound $2$ on the stretch factor of $\Theta_6$ is tight~\cite{BGH+10}. 
%The \emph{half-$\Theta_k$-graph} introduced in~\cite{BGH+10} takes only half the edges of $\Theta_k$, those belonging to non-consecutive cones. For even $k$, every $\Theta_k$ graph is the union of two spanning half-$\Theta_k$-graphs.
The key ingredient in the result of~\autoref{thm:theta6} is a specific subgraph of $\Theta_6$, called \emph{half-}$\Theta_6$. This graph preserves half of the edges in $\Theta_6$, those belonging to non-consecutive cones. %The authors 
Bonichon et al.~\cite{BGH+10} show that half-$\Theta_6$ is a
\emph{triangular-distance}\footnote{The \emph{triangular distance} from a point $a$ to a point $b$
is the side length of the smallest equilateral triangle centered at $a$ that touches $b$ and has one horizontal side.} 
%parallel to the $x$-axis.}
Delaunay triangulation, computed as the dual of the Voronoi diagram based on
the triangular distance function. Combined with Chew's proof that any triangular-distance Delaunay triangulation is a $2$-spanner~\cite{Chew89}, this result settles \autoref{thm:theta6}. 
%that each of the two half-$\Theta_6$ subgraphs of $\Theta_6$ is a $2$-spanner. This yields the result of Thm.~\ref{thm:theta6}. For details, we refer the reader to~\cite{BGH+10}. 
The structure of $\Theta_6$, viewed as the union of two planar $2$-spanners,  has been used in establishing spanning properties of other graphs as well~\cite{Bon2+10,jDR12,DB13}.

%%%%%%%%%%%%%%%%%%%%%%%%%%%%%%%%%Figure Begin
\begin{figure}[htpb]
\centering
\includegraphics[width=0.4\linewidth]{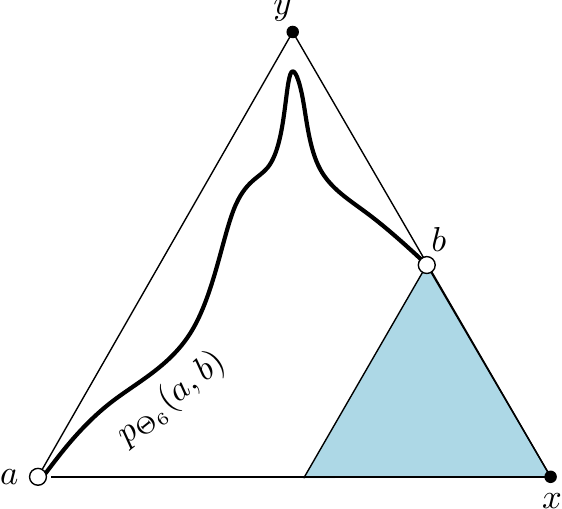} 
\caption{\autoref{lem:thetapath}: $|\pp_{\Theta_6}(a, b)| \le |ay| + |by|$.}
\label{fig:trapezoid}
\end{figure}
%%%%%%%%%%%%%%%%%%%%%%%%%%%%%%%%%Figure End

%%%%%%%%%%%%%%%%%%%%%%%%%%%%%%%%%Figure Begin
\begin{figure}[htpb]
\centering
\begin{tabular}{c@{\hspace{0.05\linewidth}}c}
\includegraphics[width=0.4\linewidth]{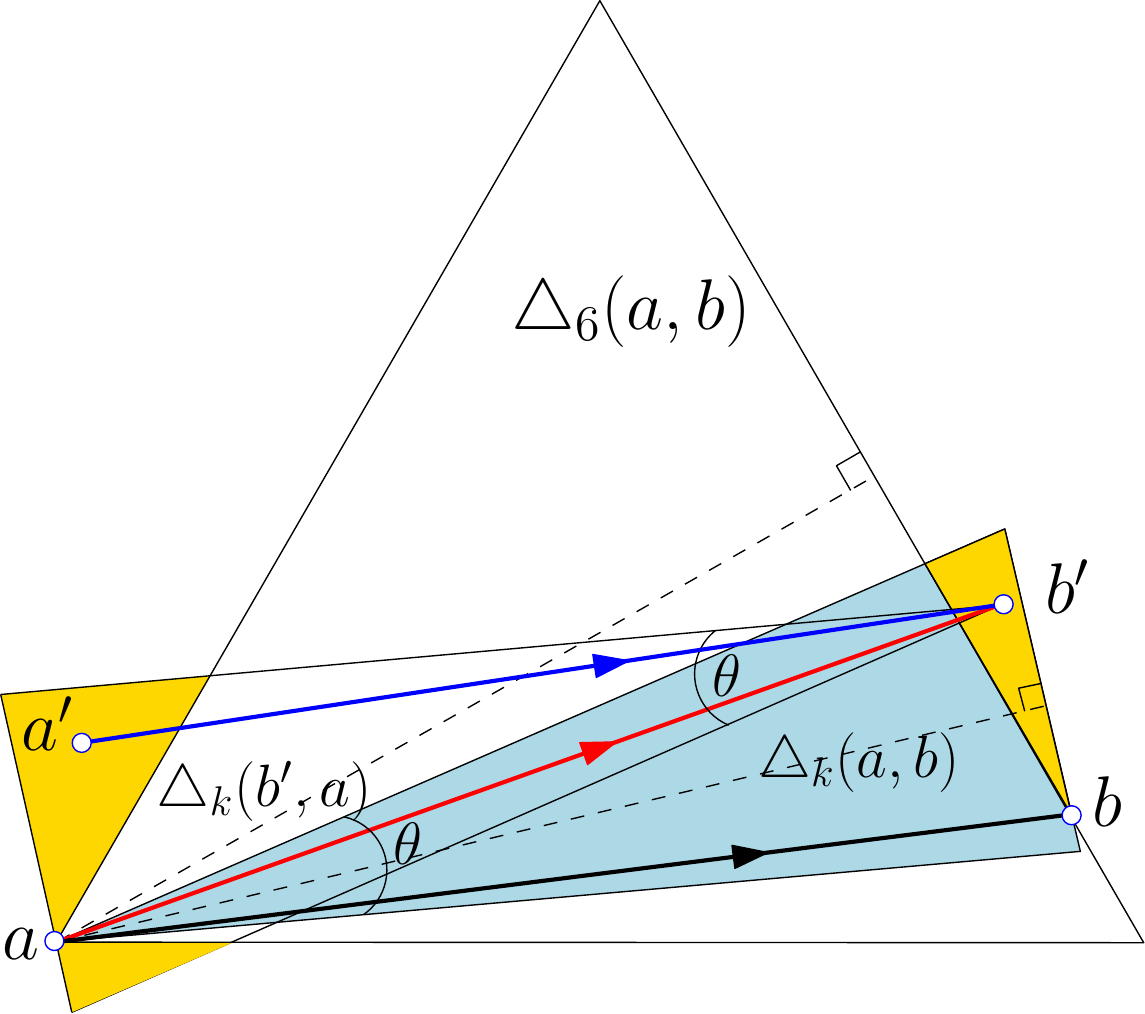} & 
\includegraphics[width=0.43\linewidth]{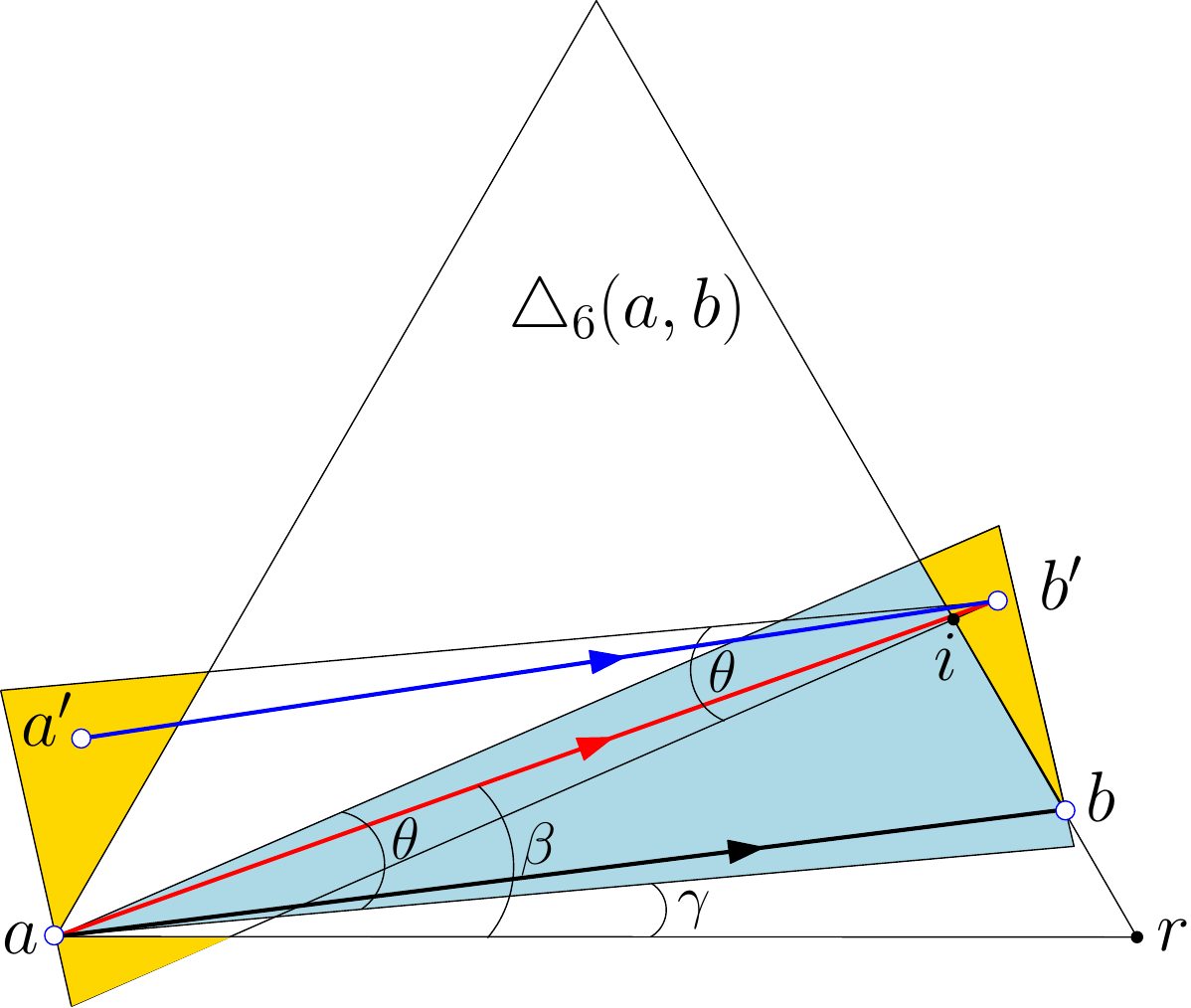} \\
(a) & (b) 
\end{tabular}
\caption{(a) Canonical $\Theta$-configuration: $ab \in \Theta_6$, $ab' \in \Theta_k$ and $a'b' \in \Theta\Theta_k$ (b) Bounding $|ab'|$, $|a'b'|$ and $|bb'|$.}
\label{fig:abba}
\end{figure}
%%%%%%%%%%%%%%%%%%%%%%%%%%%%%%%%%Figure End

\noindent
%\autoref{lem:thetapath} below (established in~\cite{jDR12}) 
Before stating our preliminary results, we define the term $T(\alpha$) parameterized by angle $\alpha \in [0, \pi/3]$ as 
\begin{equation}
T(\alpha) = \frac{\sin(\pi/3-\alpha)-\sin\alpha}{\sin(\pi/3)} \le 1
\label{eq:T}
\end{equation}
This term will occur frequently in our analysis, and this definition will come in handy. The upper bound of $1$ follows from the fact that $T(\alpha$) decreases as $\alpha$ increases, therefore $T(\alpha) \le T(0) = 1$.  
The following lemma plays a central role in the proofs of Lemmas~\ref{lem:paa1} and~\ref{lem:paasecond}. 
\begin{lemma}{\emph{\cite{DB13}}}
Let $a, b \in S$ and let $x$ and $y$ be the other two vertices of $\triangle_6(a, b)$. %Let $z$ be the point on $ax$ such that $bz$ is parallel to $ay$. 
If $\triangle_6(b,x)$ is empty of points in $S$, then $|\pp_{\Theta_6}(a, b)| \le |ay| + |by|$. Moreover, each edge of $\pp_{\Theta_6}(a, b)$ is no longer than $|ay|$. \emph{[Refer to~\autoref{fig:trapezoid}.]}
\label{lem:thetapath}
\end{lemma}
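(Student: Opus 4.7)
The plan is to prove both conclusions of the lemma by induction on $|S\cap\triangle_6(a,b)|$.  I will set up coordinates so that $a$ sits at the origin and the bisector of $C_6(a,b)$ lies along the positive $x$-axis; writing $b=(s,t)$, we then have $x=(s,s/\sqrt{3})$, $y=(s,-s/\sqrt{3})$, and $|ay|=|ax|=2s/\sqrt{3}$.  The base case, when $\triangle_6(a,b)$ contains no points of $S$ other than $a$ and $b$, is immediate: $b$ has the smallest projection in $C_6(a,b)$, so $\arr{ab}\in\Theta_6$, and since $b$ lies in the equilateral triangle $axy$ we get $|ab|\le|ay|\le|ay|+|by|$.

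For the inductive step, I will take $a_1\ne b$ to be the $\Theta_6$-neighbor of $a$ in cone $C_6(a,b)$, with coordinates $(u,v)$ satisfying $u<s$ and $|v|\le u/\sqrt{3}$.  The canonical triangle $\triangle_6(a_1,b)$ has corners $y_1=(s,\,v-(s-u)/\sqrt{3})$ and $x_1=(s,\,v+(s-u)/\sqrt{3})$, and $b$ lies on segment $y_1x_1$ because $b\in C_6(a_1,b)$.  The cone inequality $v\le u/\sqrt{3}$ puts $x_1$ no higher than $x$, so $\triangle_6(b,x_1)\subseteq\triangle_6(b,x)$ is likewise empty; since $\triangle_6(a_1,b)\subsetneq\triangle_6(a,b)$ and excludes $a$, induction applies to the pair $(a_1,b)$ and yields a path of total length at most $|a_1y_1|+|by_1|$ whose edges are each of length at most $|a_1y_1|=2(s-u)/\sqrt{3}\le|ay|$.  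Prepending edge $aa_1$ (of length at most $|ay|$, since $a_1$ lies in the equilateral triangle $axy$) preserves the per-edge bound.

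The total-length claim then reduces to a single inequality, $|aa_1|+|a_1y_1|+|by_1|\le|ay|+|by|$.  Since $b$ sits on segment $y_1x_1$, we have $|by_1|=t-v+(s-u)/\sqrt{3}$ and $|by|=t+s/\sqrt{3}$; substituting the distances and cancelling collapses the claim to $\sqrt{u^2+v^2}\le\sqrt{3}\,u+v$, which on squaring both positive sides becomes $u(u+\sqrt{3}\,v)\ge 0$ --- precisely the other half of the cone constraint, $v\ge -u/\sqrt{3}$.  I expect the hardest part of writing this up to be not any individual computation but the geometric bookkeeping: confirming that $b$ really lies on segment $y_1x_1$, that the containment $\triangle_6(b,x_1)\subseteq\triangle_6(b,x)$ genuinely propagates the empty-triangle hypothesis to the smaller instance, and that both halves of the cone constraint on $a_1$ end up getting used (the upper half for the containment, the lower half for the final inequality).
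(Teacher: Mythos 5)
First, note that the paper does not prove this lemma at all: it is imported from \cite{DB13}, so there is no in-paper proof to compare against, and your proposal has to stand on its own.

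The inductive skeleton is reasonable and the algebra in the ``good'' case is correct (I checked that $|aa_1|+|a_1y_1|+|by_1|\le|ay|+|by|$ does reduce to $u(u+\sqrt3\,v)\ge0$, that $\triangle_6(a_1,b)\subseteq\triangle_6(a,b)\setminus\{a\}$, and that $\triangle_6(b,x_1)\subseteq\triangle_6(b,x)$). But there is a genuine gap at the step ``$b$ lies on segment $y_1x_1$ because $b\in C_6(a_1,b)$.'' That sentence is circular: $C_6(a_1,b)$ is by definition whichever cone of $a_1$ contains $b$, whereas your formulas for $x_1,y_1$ silently assume it is the \emph{translate} of $C_6(a,b)$ to apex $a_1$, i.e.\ that $|t-v|\le(s-u)/\sqrt3$. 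This fails in general. Concretely, take $b=(1,1/\sqrt3-\epsilon)$ near the corner $x$ and $a_1=(0.9,\,-0.9/\sqrt3+\epsilon')$ near the lower ray: then $a_1$ has the minimum projection, so $\arr{aa_1}\in\Theta_6$, and $\triangle_6(b,x)$ has circumradius $O(\epsilon)$ and is empty, yet $t-v\approx 1.1\gg(s-u)/\sqrt3\approx0.06$, so $b$ lies in the cone of $a_1$ adjacent to the translated one. The canonical triangle $\triangle_6(a_1,b)$ is then rotated by $\pi/3$, your expressions for $x_1,y_1$ do not describe it, and the recursion as set up collapses; worse, a naive recursive bound on that rotated triangle (side length $\approx|ay|$) gives roughly $|aa_1|+2|ay|\approx 3.35>|ay|+|by|\approx 2.31$, so the strategy, not just the bookkeeping, breaks there.

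The hypothesis that $\triangle_6(b,x)$ is empty is exactly what is needed to exclude \emph{one} of the two failure modes --- a short computation shows that any $c\in\triangle_6(a,b)$ with $v-t>(s-u)/\sqrt3$ (so that $b$ would fall \emph{below} $c$'s translated cone) necessarily lies in $\triangle_6(b,x)$ --- but it does nothing against the symmetric mode above, where $a_1$ sits in $C_6(b,a)$ near the $y$-side and $b$ falls in $a_1$'s adjacent cone toward $x$. Handling that residual case is the actual content of the lemma and the reason the proof in \cite{DB13} is more involved than a one-line recursion on the out-neighbor of $a$; your writeup treats it as ``geometric bookkeeping'' to be confirmed later, but it is precisely the point where the argument as proposed does not go through.
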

Note that~\autoref{lem:thetapath} does not specify which of the two sides $ax$ and $ay$ lies clockwise from $\triangle_6(a,b)$, so the lemma applies in both situations. 
%
%\autoref{lem:abba} below 
The following lemma establishes fundamental relationships on the distances between points in a canonical $\Theta$-configuration. 

\begin{restatable}{lemma}{abbalemma}
\label{lem:abba}
%\begin{lemma}
Let $a, b, b', a' \in S$ be points in a canonical $\Theta$-configuration. Then each of $|ab'|$ and $|a'b'|$ is no longer than $|ab|/\cos(\theta/2)$. In addition, if $\beta$ and $\gamma$ are the angles formed by the horizontal through $a$ with $ab'$ and the lower ray of $C_k(a, b)$, respectively, and if $\beta \le \pi/6$, then 
\begin{eqnarray*}
|ab'| \ge |ab|\frac{\sin(\pi/3+\gamma)}{\sin(\pi/3+\beta)} 
% &  \le |ab'|  & \le |ab|/\cos(\theta/2) \\
%|ab'|\frac{\sin(\pi/3-\beta)}{\sin(\pi/3-\gamma)} & \le |a'b'| & \le |ab|/\cos(\theta/2)  %\frac{|ab|}{\cos(\theta/2)}
\end{eqnarray*}
\emph{[Refer to~\autoref{fig:abba}b.]} 
%\end{lemma}
\end{restatable}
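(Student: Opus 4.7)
My plan is to prove each of the three inequalities by projecting relevant points onto the appropriate cone bisector and invoking the defining minimality properties of $\Theta_k$, $\Theta\Theta_k$, and $\Theta_6$. Throughout, let $\mu_1$ denote the bisector of $C_k(a,b)$ and $\mu_2$ the bisector of $C_k(b',a)$; let $\alpha,\alpha'$ be the angles that $ab,ab'$ respectively form with $\mu_1$, and $\delta_a,\delta_{a'}$ the angles that $b'a,b'a'$ respectively form with $\mu_2$. Each of these four angles lies in $[0,\theta/2]$.

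For $|ab'|\le|ab|/\cos(\theta/2)$, I would use the defining property of $\Theta_k$ at $a$: since $b'$ is chosen over $b$ in the cone $C_k(a,b)$, we have $|ab'|\cos\alpha'\le|ab|\cos\alpha\le|ab|$, and dividing by $\cos\alpha'\ge\cos(\theta/2)$ yields the bound. For $|a'b'|\le|ab|/\cos(\theta/2)$ I would note first that because $\arr{ab'}\in\Theta_k$ and $a\in C_k(b',a)$, the edge $\arr{ab'}$ is an incoming Theta edge at $b'$ in that cone, so the Theta-Theta filter gives $|a'b'|\cos\delta_{a'}\le|ab'|\cos\delta_a$. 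The key geometric observation is that $k=6k'$ is even, so the family of cone bisectors (which are parallel across vertices) is closed under rotation by $\pi$; hence $\mu_2=\mu_1+\pi$, and since $\arr{b'a}$ is antiparallel to $\arr{ab'}$ we get $\delta_a=\alpha'$. Chaining the two Theta inequalities therefore gives
\begin{equation*}
|a'b'|\cos\delta_{a'}\;\le\;|ab'|\cos\alpha'\;\le\;|ab|\cos\alpha\;\le\;|ab|,
\end{equation*}
and dividing by $\cos\delta_{a'}\ge\cos(\theta/2)$ yields the desired bound.

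For the lower bound $|ab'|\ge|ab|\sin(\pi/3+\gamma)/\sin(\pi/3+\beta)$, I would invoke the $\Theta_6$ minimality at $a$: because $\arr{ab}\in\Theta_6$ and $b'\in C_6(a,b)=C_{\{6,1\}}(a)$, projecting onto the $\pi/6$-bisector of $C_{\{6,1\}}(a)$ gives $|ab'|\cos(\beta-\pi/6)\ge|ab|\cos(\phi-\pi/6)$, where $\phi$ is the angle of $ab$ from horizontal. Using $\phi\in[\gamma,\gamma+\theta]$ together with the canonical assumption $\gamma+\theta/2\le\pi/6$, a short case analysis (splitting on whether $\gamma+\theta\le\pi/6$ or not) shows that $|\phi-\pi/6|$ is maximized at the endpoint $\phi=\gamma$, so $\cos(\phi-\pi/6)\ge\cos(\gamma-\pi/6)$. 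The algebraic identity $\cos(x-\pi/6)=\sin(\pi/3+x)$ applied to both sides (under $\beta\le\pi/6$, which keeps the arguments in the convenient monotone range of $\sin$) then produces the stated inequality.

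The hardest step is the bound on $|a'b'|$: a naive chaining of the $\Theta_k$ inequality at $a$ and the $\Theta\Theta_k$ inequality at $b'$ would yield only $|a'b'|\le|ab|/\cos^2(\theta/2)$. The tightening to a single factor of $1/\cos(\theta/2)$ crucially uses the parity of $k=6k'$ to force $\mu_2=\mu_1+\pi$, which collapses the two projection losses into one by matching $\delta_a$ with $\alpha'$.
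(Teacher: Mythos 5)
Your proposal is correct and rests on the same underlying facts as the paper's proof---the $\Theta_k$-minimality of $b'$ at $a$, the $\Theta\Theta_k$-minimality of $a'$ at $b'$, and the $\Theta_6$-minimality of $b$ at $a$---but it phrases everything as projection inequalities onto cone bisectors, where the paper reasons metrically inside the canonical triangles. For the two upper bounds the paper argues that $|ab'|$ is at most the side length $s$ of $\triangle_k(a,b)$, whose height $s\cos(\theta/2)$ is at most $|ab|$, and that $\triangle_k(b',a)$ is a ``similar'' triangle of side length at most $s$; your chain $|a'b'|\cos\delta_{a'}\le|ab'|\cos\alpha'\le|ab|\cos\alpha\le|ab|$ is the analytic form of this. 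Your explicit justification of the crucial collapse from $1/\cos^2(\theta/2)$ to $1/\cos(\theta/2)$---that $k=6k'$ is even, so the bisector of $C_k(b',a)$ is the reversal of the bisector of $C_k(a,b)$ and hence $\delta_a=\alpha'$---is exactly what the paper's one-line similarity claim silently relies on, and spelling it out is an improvement in rigor. For the lower bound, the paper intersects $ab'$ with the base of $\triangle_6(a,b)$ and applies the Law of Sines to the resulting triangle (with an apparent sign slip in the angle at the intersection point), whereas you write the $\Theta_6$ projection inequality $|ab'|\cos(\beta-\pi/6)\ge|ab|\cos(\phi-\pi/6)$ and convert with $\cos(x-\pi/6)=\sin(\pi/3+x)$; your case analysis bounding $\cos(\phi-\pi/6)$ below by $\cos(\gamma-\pi/6)$ via the canonical assumption $\gamma+\theta/2\le\pi/6$ correctly covers the subcase $\phi>\pi/6$ that the paper's appeal to an acute angle glosses over. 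Both routes are sound and yield identical bounds.
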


\noindent
Lemmas~\ref{lem:paa1} through~\ref{lem:paa5} isolate specific situations that will arise in the analysis of our main result. 
We state %and investigate 
them independently in this section.

%\begin{restatable}{lemma}{bblemma}
%\label{lem:pbb}
%Let $a, b, b' \in S$ be points in a canonical $\Theta$-configuration. Let $\beta$ and $\gamma$ be the angles formed by 
%the horizontal through $a$ with $ab'$ and the lower ray of $C_k(a, b)$, respectively. Then 
%the side of $\triangle_6(b', b)$ is strictly smaller than $ab$, and 
%\begin{equation*}
%|\pp_{\Theta_6}(b, b')| \le |ab| \frac{\sin(\pi/3-\gamma)-\sin\gamma}{\sin(\pi/3)} -|ab'| \frac{\sin(\pi/3-\alpha)-\sin\alpha}{\sin(\pi/3)} 
%\label{eq:pbb}
%\end{equation*}
%Furthermore, each edge of $\pp_{\Theta_6}(b,b')$ is strictly smaller than $|ab|$, for $\theta < \pi/3$. \emph{[Refer to~\autoref{fig:casebb}a.]}
%\end{lemma}
%\end{restatable}

%\vspace{-1em}
%%%%%%%%%%%%%%%%%%%%%%%%%%%%%%%%%Figure Begin
\begin{figure}[htbp]
\centering
\begin{tabular}{c@{\hspace{0.05\linewidth}}c}
\includegraphics[width=0.45\linewidth]{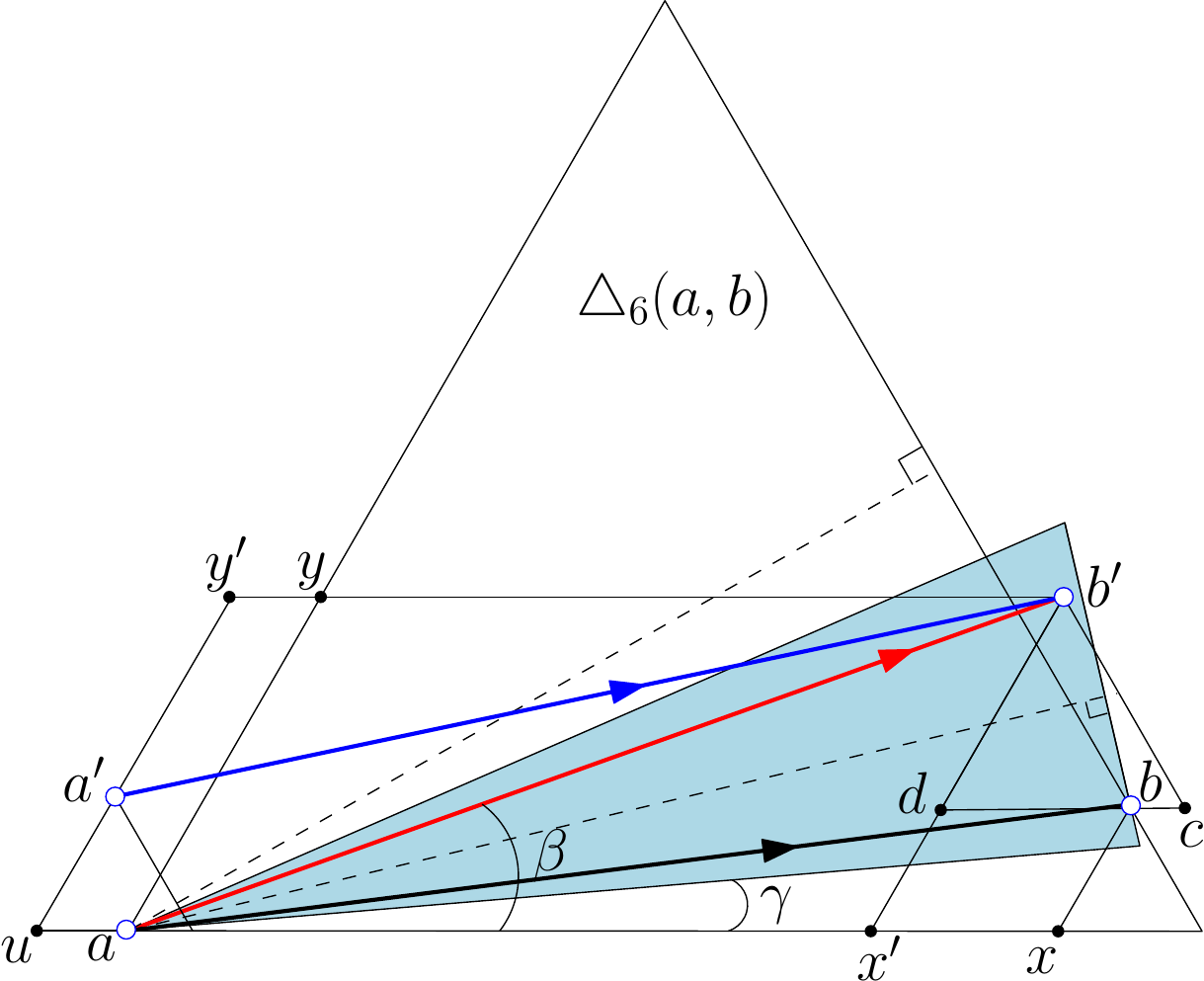} &
\includegraphics[width=0.45\linewidth]{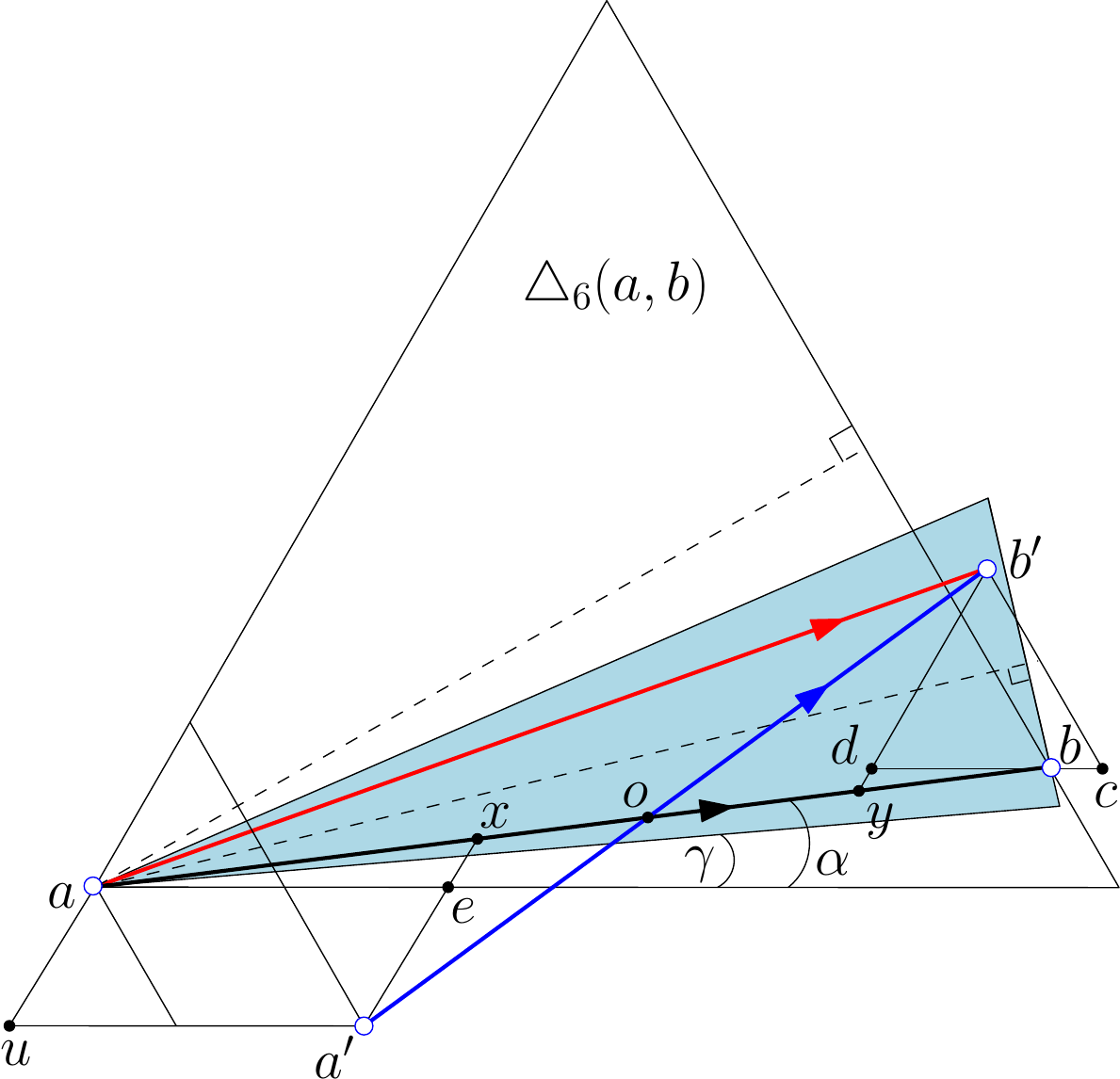} \\
(a) & (b)  
\end{tabular}
\caption{Bounding $|\pp_{\Theta_6}(a, a')|+|\pp_{\Theta_6}(b, b')|$ (a)~\autoref{lem:paa1}: $a'$ above $a$ (b)~\autoref{lem:paasecond}: $a'$ below $a$.} 
%for various $\theta$ values.}
\label{fig:casebb}
\end{figure}
%%%%%%%%%%%%%%%%%%%%%%%%%%%%%%%%%Figure End
%

\begin{restatable}{lemma}{aalemma}
\label{lem:paa1}
%\begin{lemma}
Let $a, b, b', a' \in S$ be points in a canonical $\Theta$-configuration, with the additional constraint that $a' \in C_{\{6,2\}}(a)$. Let $\beta$ and $\gamma$ be the angles formed by the horizontal through $a$ with $ab'$ and the lower ray of $C_k(a, b)$, respectively. 
Then %the side of $\triangle_6(a', a)$ is strictly smaller than $|ab'|$, and 
\begin{equation*}
|\pp_{\Theta_6}(a, a')| + |\pp_{\Theta_6}(b, b')| \le (|ab|+|a'b'|)\cdot T(\gamma) -2 |ab'|\cdot T(\beta) 
% \label{eq:paa1}
\end{equation*}
Here the term $T$ is as defined in~\emph{(\ref{eq:T})}. Furthermore, each edge of $\pp_{\Theta_6}(a,a')$ and $\pp_{\Theta_6}(b,b')$ is strictly smaller than $ab$, for $\theta \le \pi/6$. \emph{[Refer to~\autoref{fig:casebb}a.]}
%\end{lemma}
\end{restatable}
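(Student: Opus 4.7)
The plan is to invoke \autoref{lem:thetapath} twice, once for the pair $(b,b')$ and once for the pair $(a,a')$, add the two resulting bounds, and then collapse the sum trigonometrically into the claimed expression involving $T(\gamma)$ and $T(\beta)$.

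For the pair $(b,b')$, the key fact is that $\arr{ab'} \in \Theta_k$, so the canonical $\Theta_k$-triangle $\triangle_k(a,b')$ is empty of points of $S$. Because $k = 6k'$, the cone $C_k(a,b)$ sits inside $C_{\{6,1\}}(a)$ and the slab of $\triangle_k(a,b')$ near the side through $b'$ contains the sub-triangle of $\triangle_6(b,b')$ that lies on the side facing $a$. I would identify the vertex $x_1$ of $\triangle_6(b,b')$ on that side and argue $\triangle_6(b',x_1) \subseteq \triangle_k(a,b')$, which is therefore empty; this validates \autoref{lem:thetapath} and gives $|\pp_{\Theta_6}(b,b')| \le |by_1| + |b'y_1|$, where $y_1$ is the opposite vertex of $\triangle_6(b,b')$.

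For the pair $(a,a')$ the same template applies, but I use two properties simultaneously: first, $\arr{a'b'} \in \Theta_k$ forces $\triangle_k(a',b')$ to be empty; second, the $\Theta\Theta_k$ filter at $b'$ guarantees that no other incoming $\Theta_k$ edge in $C_k(b',a)$ has a shorter projection than $a'b'$, so in particular $a$ itself is outside the relevant region around $a'$. Together these rule out any point of $S$ from occupying the sub-triangle $\triangle_6(a',x_2)$, where $x_2$ is the vertex of $\triangle_6(a,a')$ on the side shared with (or closest to) $\triangle_6(a,b)$. Applying \autoref{lem:thetapath} then yields $|\pp_{\Theta_6}(a,a')| \le |ay_2| + |a'y_2|$ with $y_2$ the remaining vertex.

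Adding the two bounds, the right-hand side decomposes into sides of canonical $\Theta_6$-triangles rooted at $a$ and at $b$ that I would compute by law of sines. Writing each contribution as a projection along the bisector of $\triangle_6(a,b)$, and using the identity $\sin(\pi/3-\alpha)-\sin\alpha = \sin(\pi/3)\,T(\alpha)$, the $|ab|$- and $|a'b'|$-terms combine into $(|ab|+|a'b'|)\,T(\gamma)$, since both associated canonical triangles share the angular offset $\gamma$ from the reference direction. The subtracted $-2|ab'|T(\beta)$ term appears because the edge $\arr{ab'}$ cuts off a portion of $\triangle_6(a,b)$ from both ends: once when bounding the $(a,a')$ path (since $b'$ shortens the side reaching $a'$) and once when bounding the $(b,b')$ path (since $b'$ shortens the side reaching $b$), each contribution equal to $|ab'|\,T(\beta)$ by the same projection computation.

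The main obstacle is establishing the two emptiness conditions required by \autoref{lem:thetapath}: the containments $\triangle_6(b',x_1) \subseteq \triangle_k(a,b')$ and its analog at $a'$ are not automatic and rely essentially on $k' \ge 5$ so that $\theta = 2\pi/(6k')$ is small enough to fit the wider $\Theta_6$ sub-triangles inside the narrower $\Theta_k$ witnesses; the range $\gamma \in [0,\pi/3-\theta]$ and $\beta \in [\gamma,\gamma+\theta]$ from the canonical configuration will be crucial here. The final clause that each edge on either path is strictly shorter than $|ab|$ follows from the ``Moreover'' part of \autoref{lem:thetapath}, which caps each edge by the sides $|ay_2|$ or $|by_1|$ respectively; a short calculation using \autoref{lem:abba} bounds both of these strictly below $|ab|$ whenever $\theta \le \pi/6$.
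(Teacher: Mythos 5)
Your top-level plan---two applications of \autoref{lem:thetapath}, one per pair, followed by Law-of-Sines bookkeeping that packages the sum into $T(\gamma)$ and $T(\beta)$ terms---is exactly the paper's skeleton, and your account of where the $-2|ab'|T(\beta)$ term comes from (one $-|ab'|T(\beta)$ contribution from each of the two path bounds) matches the paper in spirit. The genuine gap is in how you discharge the emptiness hypothesis of \autoref{lem:thetapath}, and it is not repairable in the form you propose. You apply the lemma with apexes at $b$ and at $a$ (triangles $\triangle_6(b,b')$ and $\triangle_6(a,a')$), which forces you to certify that a sub-triangle with a $\pi/3$ apex angle at $b'$ (resp.\ at $a'$) is empty, and you try to do this by containment in $\triangle_k(a,b')$ (resp.\ $\triangle_k(a',b')$). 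That containment is geometrically impossible: $b'$ lies on the far base of the sliver $\triangle_k(a,b')$, whose apex angle is $\theta\le\pi/15$, so no neighborhood of $b'$ intersected with that sliver can contain a $\pi/3$-wedge with apex $b'$; the same objection applies at $a'$. Your auxiliary appeal to the $\Theta\Theta_k$ filter (``$a$ itself is outside the relevant region around $a'$'') is also a non sequitur: the filter only asserts that the projection of $a'b'$ onto the bisector of $C_k(b',a)$ is minimal among incoming $\Theta_k$ edges at $b'$ in that cone, and yields no emptiness statement in a $60$-degree region around $a'$.

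The paper avoids this by orienting both applications of \autoref{lem:thetapath} the other way: it works with $\triangle_6(b',b)$ (apex $b'$, corners $c,d$) and $\triangle_6(a',a)$ (apex $a'$, corner $u$), so the sub-triangles that must be empty have their apexes at $b$ and at $a$ and point back into the region covered by $\triangle_6(a,b)$ and $\triangle_k(a,b)$, which is empty because $ab\in\Theta_6$ and $ab'\in\Theta_k$; the position of $b'$ inside $\triangle_k(a,b)$ is then used only to decide which corner gives the better bound ($|bc|\le|bd|$). With that orientation the per-edge caps from the ``Moreover'' clause become $|b'd|$ and $|a'u|$, both bounded by $|ab'|\sin\beta/\sin(\pi/3)\le|ab|\sin(\pi/6+\theta/2)/\bigl(\sin(\pi/3)\cos(\theta/2)\bigr)<|ab|$ for $\theta\le\pi/6$, which is how the last clause is actually proved---note it needs only $\theta\le\pi/6$, not $k'\ge 5$. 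Two smaller slips: the canonical configuration forces $\gamma\in[0,\pi/6-\theta/2]$, not $[0,\pi/3-\theta]$; and with your apex choice the caps $|ay_2|$ and $|by_1|$ are sides of $\triangle_6(a,a')$ and $\triangle_6(b,b')$, which are not obviously below $|ab|$, so even the final clause would need a different argument.
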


%%%%%%%%%%%%%%%%%%%%%%%%%%%%%%%%%Figure Begin
%\begin{figure}[htbp]
%\centering
%\includegraphics[width=0.55\linewidth]{ipefigs/caseaa2.pdf} 
%\caption{Bounding $|\pp_{\Theta_6}(a, a')| \le |au| + |a'u| = |ax| + |a'x|$.}
%\label{fig:caseaa2}
%\end{figure}
%%%%%%%%%%%%%%%%%%%%%%%%%%%%%%%%%Figure End

%%%%%%%%%%%%%%%%%%%%%%%%%%%%%%%%%Figure Begin
%\begin{figure}[htbp]
%\centering
%\begin{tabular}{c@{\hspace{0.05\linewidth}}c}
%\raisebox{3em}{\includegraphics[width=0.46\linewidth]{}} & 
%\includegraphics[width=0.54\linewidth]{ipefigs/caseaa2.pdf} \\  
%(a) & (b)
%\end{tabular}
%\caption{(a) Bounding $|\pp_{\Theta_6}(a, a')| \le |ax| + |a'x|$ (a) $a'$ above $a$ (b) $a'$ below $a$.}
%\label{fig:caseaa}
%\end{figure}
%%%%%%%%%%%%%%%%%%%%%%%%%%%%%%%%%Figure End

\begin{restatable}{lemma}{aasecondlemma}
\label{lem:paasecond}
%\begin{lemma}
Let $a, b, b', a' \in S$ be points in a canonical $\Theta$-configuration, with the additional constraints that $a' \in C_{\{6,6\}}(a)$, and the angle $\alpha$ formed by $ab$ with the horizontal through $a$ is at most $\pi/6$. 
Then % the side of $\triangle_6(a', a)$ is strictly smaller than $|ab'|$, and 
\begin{equation*}
|\pp_{\Theta_6}(a, a')| + |\pp_{\Theta_6}(b, b')| \le |ab| 
- |a'b'|\cdot\frac{\sin(\pi/3-\alpha-\theta)-\sin\theta}{\sin(\pi/3-\alpha)}
%\label{eq:paa2}
\end{equation*}
Furthermore, each edge of $\pp_{\Theta_6}(a,a')$ and $\pp_{\Theta_6}(b,b')$ is strictly shorter than $ab$, for $\theta \le \pi/12$. \emph{[Refer to~\autoref{fig:casebb}b.]}
%\end{lemma}
\end{restatable}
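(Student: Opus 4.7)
The plan is to bound $|\pp_{\Theta_6}(a,a')|$ and $|\pp_{\Theta_6}(b,b')|$ separately using \autoref{lem:thetapath}, and then combine the two estimates via the canonical geometry. The setup parallels the proof of \autoref{lem:paa1}, but with $a'$ on the opposite side of $ab$: now $a' \in C_{\{6,6\}}(a)$ rather than $C_{\{6,2\}}(a)$. This reflection changes which vertices of the canonical triangles serve as the ``$y$'' in \autoref{lem:thetapath}, and the angular bookkeeping is now controlled by $\alpha$ together with $\theta$ rather than by $\gamma$.

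For $\pp_{\Theta_6}(b,b')$, since $\arr{ab'}\in\Theta_k$ lies in $C_k(a,b)\subset C_{\{6,1\}}(a)$, the point $b'$ sits inside $\triangle_6(a,b)$. The $\Theta_k$-selection of $b'$ forces its projection onto the bisector of $C_k(a,b)$ to be no larger than that of $b$, which rules out any point of $S$ from occupying the sub-canonical triangle needed to apply \autoref{lem:thetapath}. This yields $|\pp_{\Theta_6}(b,b')|\le|by_1|+|b'y_1|$ with each edge at most $|by_1|$, where $y_1$ is the vertex of $\triangle_6(b,b')$ closer to $a$. Analogously, for $\pp_{\Theta_6}(a,a')$ the triangle $\triangle_6(a,a')$ lies below the horizontal through $a$, and the $\Theta\Theta_k$-selection of $a'$ at $b'$ in cone $C_k(b',a)$ forces the projection of $a'$ on that cone's bisector to be no larger than the projection of $a$. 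Translating this inequality back into the geometry of $\triangle_6(a,a')$ establishes emptiness of the required sub-triangle, and \autoref{lem:thetapath} yields $|\pp_{\Theta_6}(a,a')|\le|ay_2|+|a'y_2|$ with each edge at most $|ay_2|$.

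Summing the two bounds produces four terms, which I would re-express in terms of $|ab|$, $|a'b'|$, $\alpha$, and $\theta$ using the law of sines inside $\triangle_6(a,b)$, $\triangle_6(a,a')$, and $\triangle_6(b,b')$. The $a$- and $b$-side contributions $|by_1|+|ay_2|$ together bound by $|ab|$, while the remaining portion $|b'y_1|+|a'y_2|$ becomes a negative multiple of $|a'b'|$ with coefficient exactly $(\sin(\pi/3-\alpha-\theta)-\sin\theta)/\sin(\pi/3-\alpha)$, delivering the stated inequality. The hypothesis $\alpha\le\pi/6$ is needed to keep the angles inside the valid sine ranges used in the computation, and the hypothesis $a'\in C_{\{6,6\}}(a)$ fixes the upper ray of $\triangle_6(a,a')$ along the horizontal through $a$, which is what allows the clean appearance of $\alpha+\theta$ in the numerator.

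The per-edge claim follows from the edge-length clause of \autoref{lem:thetapath} combined with verifying that $|by_1|<|ab|$ and $|ay_2|<|ab|$ strictly when $\theta\le\pi/12$; this reduces to checking that two explicit sine ratios stay below $1$ under the angular constraint. The main obstacle I anticipate is the two emptiness arguments: each must exploit the containment $C_k\subset C_6$ to confine any would-be blocking point into a cone at $a$ or at $b'$ where its presence would contradict the Theta or Theta-Theta selection criterion, and this requires careful case analysis on the blocker's location relative to the bisectors. Once emptiness is established, the remaining work is straightforward planar trigonometry in $\alpha$ and $\theta$.
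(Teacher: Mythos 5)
Your setup---bounding each of the two paths via \autoref{lem:thetapath} applied to $\triangle_6(b',b)$ and $\triangle_6(a',a)$---matches the paper's, but the combination step, which is the only place the coefficient $(\sin(\pi/3-\alpha-\theta)-\sin\theta)/\sin(\pi/3-\alpha)$ can come from, is described in a way that cannot work. You assert that $|by_1|+|ay_2|\le|ab|$ and that ``the remaining portion $|b'y_1|+|a'y_2|$ becomes a negative multiple of $|a'b'|$''; but $|b'y_1|+|a'y_2|$ is a sum of two Euclidean lengths and hence nonnegative, so these two assertions together can only yield a bound of the form $|ab|$ plus something nonnegative, never $|ab|$ minus a positive multiple of $|a'b'|$. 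The negative term does not live in the primed contributions; it lives in the \emph{deficit} by which the unprimed contributions fall short of $|ab|$, and your sketch gives no mechanism for quantifying that deficit---which is the entire difficulty of the lemma.

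The missing idea is to relax both corner bounds to points lying on the segment $ab$ itself: let $x$ be where the right ray of $C_{\{6,2\}}(a')$ meets $ab$ and $y$ where the line through $b'$ and the left corner of $\triangle_6(b',b)$ meets $ab$, so that $|\pp_{\Theta_6}(a,a')|\le|ax|+|a'x|$, $|\pp_{\Theta_6}(b,b')|\le|by|+|b'y|$, and $|ax|+|by|=|ab|-|xy|$ exactly. Then set $o=ab\cap a'b'$ and use the similar triangles $\triangle a'ox\sim\triangle b'oy$ together with the Law of Sines: because the ray from $b'$ toward $a$ parallel to $ab$ lies inside the cone $C_k(b',a)$ of angle $\theta$, one gets $\ang{a'ox}\le\theta$, while $\ang{a'xo}=2\pi/3+\alpha$, whence $\ang{xa'o}>\pi/3-\alpha-\theta$; this yields $|xy|>|a'b'|\sin(\pi/3-\alpha-\theta)/\sin(\pi/3-\alpha)$ and $|a'x|+|b'y|<|a'b'|\sin\theta/\sin(\pi/3-\alpha)$, and subtracting gives the stated inequality. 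Your per-edge clause then follows at once from the positivity of the coefficient for $\theta\le\pi/12$, which forces the total path length below $|ab|$; the separate sine-ratio check you propose is unnecessary.
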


\begin{restatable}{lemma}{aathirdlemma}
\label{lem:paa5}
%\begin{lemma}
Let $a, b, b', a' \in S$ be points in a canonical $\Theta$-configuration, with the additional constraint that either  
$a' \in C_{\{6,5\}}(a)$, or $a' \in C_{\{6,6\}}(a)$ and the angle formed by $ab$ with the horizontal through $a$ is above $\pi/6$. 
Then % the side of $\triangle_6(a', a)$ is strictly smaller than $|ab'|$, and 
\begin{small}
\begin{equation*}
|\pp_{\Theta_6}(a, a')| + |\pp_{\Theta_6}(b, b')| \le 8|ab|\sin(\theta/2)
%\label{eq:paa2}
\end{equation*}
\end{small}Furthermore, each edge of $\pp_{\Theta_6}(a,a')$ and $\pp_{\Theta_6}(b,b')$ is strictly shorter than $ab$, for $\theta \le \pi/15$. %\emph{[Refer to~\autoref{fig:casebb}b.]}
%\end{lemma}
\end{restatable}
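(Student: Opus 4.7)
The plan is to first bound the Euclidean distances $|aa'|$ and $|bb'|$ directly by $O(|ab|\sin(\theta/2))$, and then convert these into $\Theta_6$-path bounds via the $2$-spanner property from Theorem~\ref{thm:theta6}.

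To bound $|bb'|$: both $b$ and $b'$ lie in the cone $C_k(a,b)$, so $\angle bab' \le \theta$. Since $\arr{ab'} \in \Theta_k$, the projection of $b'$ on the bisector of $C_k(a,b)$ is at most that of $b$, hence both $b$ and $b'$ lie inside the canonical triangle $\triangle_k(a,b)$. Combining this projection constraint with the distance bound $|ab'| \le |ab|/\cos(\theta/2)$ from Lemma~\ref{lem:abba}, the law of cosines applied to $\triangle abb'$ is extremized when $b$ and $b'$ sit on opposite cone rays at the same bisector-projection depth, yielding $|bb'| \le 2|ab|\sin(\theta/2)$.

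To bound $|aa'|$: the same strategy is applied inside the cone $C_k(b',a)$, whose apex is $b'$. Both $a$ and $a'$ lie in this cone, so $\angle ab'a' \le \theta$, and the $\Theta\Theta_k$ filter guarantees that $a'$'s bisector-projection does not exceed $a$'s. Lemma~\ref{lem:abba} controls $|ab'|$ and $|a'b'|$, and the law of cosines again provides a general bound on $|aa'|$. The decisive ingredient is the positional restriction: requiring $a' \in C_{\{6,5\}}(a)$, or $a' \in C_{\{6,6\}}(a)$ with $\alpha > \pi/6$, pins the configuration of $a$ and $a'$ inside $C_k(b',a)$ so that they lie near opposite boundary rays with $|ab'|$ and $|a'b'|$ both close to $|ab|$, and the trigonometric computation collapses to $|aa'| \le 2|ab|\sin(\theta/2)$.

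With both Euclidean bounds in hand, Theorem~\ref{thm:theta6} gives $|\pp_{\Theta_6}(a,a')| \le 2|aa'|$ and $|\pp_{\Theta_6}(b,b')| \le 2|bb'|$, and summing yields the desired $8|ab|\sin(\theta/2)$ bound. For the per-edge claim, I would invoke Lemma~\ref{lem:thetapath}: each edge of $\pp_{\Theta_6}(a,a')$ is bounded above by a side of the canonical triangle $\triangle_6(a,a')$, which is at most a constant multiple of $|aa'| \le 2|ab|\sin(\theta/2)$. For $\theta \le \pi/15$, this quantity is strictly smaller than $|ab|$, and the argument for $\pp_{\Theta_6}(b,b')$ is identical.

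The main obstacle is isolating the geometric effect of the cone constraint on $a'$. Without this restriction, the naive law-of-cosines bound degrades to $|aa'| \le 2|ab|\tan(\theta/2)$, missing the required $\cos(\theta/2)$ factor. Recovering the sharper $\sin(\theta/2)$ bound requires a careful case analysis distinguishing $a' \in C_{\{6,5\}}(a)$ from $a' \in C_{\{6,6\}}(a)$ with $\alpha > \pi/6$, together with a trigonometric examination of the possible positions of the bisector of $C_k(b',a)$ relative to the horizontal; this is where Lemma~\ref{lem:abba}'s lower-bound on $|ab'|$ (in terms of $\beta$ and $\gamma$) should enter to rule out the configurations that would otherwise give the weaker $\tan(\theta/2)$ bound.
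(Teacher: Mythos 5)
There is a genuine gap at the heart of your argument: the claimed Euclidean bounds $|bb'|\le 2|ab|\sin(\theta/2)$ and $|aa'|\le 2|ab|\sin(\theta/2)$ do not follow from the facts you cite. For $|bb'|$ you use only that $b$ and $b'$ both lie in $C_k(a,b)$ with $\ang bab'\le\theta$ and that the bisector projection of $b'$ is at most that of $b$; but this is a one-sided constraint, so nothing in your setup prevents $b'$ from lying arbitrarily close to the apex $a$, in which case $|bb'|$ approaches $|ab|$ rather than $O(|ab|\sin(\theta/2))$. (Ruling this out needs the emptiness of $\triangle_6(a,b)$, which confines $b'$ to $\triangle_k(a,b)\setminus\triangle_6(a,b)$ --- a fact you never invoke.) Consequently your claim that the law of cosines is ``extremized when $b$ and $b'$ sit on opposite cone rays at the same bisector-projection depth'' is false, and even that configuration yields $2|ab|\tan(\theta/2)$, not $2|ab|\sin(\theta/2)$, as you concede at the end. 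The bound on $|aa'|$ has the same defect, and there you explicitly defer the decisive step (``this is where Lemma~\ref{lem:abba}'s lower bound should enter'') without carrying it out; that deferred step is essentially the entire content of the lemma.

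The paper closes the gap with a different device: it takes the intersection point $o$ of the segments $ab$ and $a'b'$, observes that $\ang aoa'\le\theta$ (the ray from $b'$ toward $a$ parallel to $ab$ lies inside the cone $C_k(b',a')$ of aperture $\theta$), and applies the law of sines in $\triangle aoa'$ and $\triangle bob'$. The positional hypotheses ($a'\in C_{\{6,5\}}(a)$, or $a'\in C_{\{6,6\}}(a)$ with $\alpha>\pi/6$) are used precisely to lower-bound the angles $\ang a'ao$ and $\ang b'bo$ by $\pi/6$ (respectively $\pi/3$), which gives $|aa'|+|bb'|\le(|oa'|+|ob'|)\sin\theta/\sin(\pi/6)=2|a'b'|\sin\theta$. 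Note that this bounds the \emph{sum}, telescoping via $|oa'|+|ob'|=|a'b'|$, instead of bounding each distance separately as you attempt; combining with $|a'b'|\le|ab|/\cos(\theta/2)$ from \autoref{lem:abba} and the factor $2$ from \autoref{thm:theta6} then yields $8|ab|\sin(\theta/2)$. Finally, your treatment of the per-edge claim is more complicated than necessary and not clearly valid as stated (\autoref{lem:thetapath} carries an emptiness hypothesis you do not verify): since $8\sin(\theta/2)<1$ for $\theta\le\pi/15$, the total path length is already below $|ab|$, so each edge is as well.
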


Our approach to finding a short path in $\Theta\Theta_k$ between the endpoints of each edge in $\Theta_6$ uses induction on the Euclidean lengths of the edges in $\Theta_6$. The following lemma will be useful in proving the inductive step in various situations. 

%\begin{restatable}{lemma}{pablemma}
\begin{lemma}
\label{lem:pab}
Let $a, b, b', a' \in S$ be points in a canonical $\Theta$-configuration, and let $t \ge 1$ be a fixed real value.  Assume that, for each edge $xy \in \Theta_6$ no longer than $ab$, the inequality $|\pp_{\Theta\Theta_k}(x,y)| \le t\cdot|xy|$ holds. 
Let $\pp_{\Theta\Theta_k}(a, b) = \pp_{\Theta\Theta_k}(a, a') \oplus a'b' \oplus \pp_{\Theta\Theta_k}(b', b)$. 
If $|\pp_{\Theta_6}(a, a')| < |ab|$ and $|\pp_{\Theta_6}(b, b')| < |ab|$, then 
\[
|\pp_{\Theta\Theta_k}(a, b)| \le t\cdot |\pp_{\Theta_6}(a, a')| + t\cdot |\pp_{\Theta_6}(b',b)| + |a'b'|
\]
Furthermore, if $|ab|-|\pp_{\Theta_6}(a, a') - |\pp_{\Theta_6}(b',b)| > 0$, then 
$|\pp_{\Theta\Theta_k}(a, b)| \le t\cdot |ab|$  for any real value $t$ such that 
\begin{equation}
\label{eq:t}
t \ge \frac{|ab|/\cos(\theta/2)}{|ab|-|\pp_{\Theta_6}(a, a')| - |\pp_{\Theta_6}(b',b)|}.
\end{equation}
Here the symbol $\oplus$ is used to denote the path concatenation operator. 
\end{lemma}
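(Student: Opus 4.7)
The lemma is essentially an inductive bookkeeping statement, so my plan is to unfold the two $\Theta_6$ sub-paths edge by edge, apply the hypothesized bound to each such edge, and then combine with the middle edge $a'b'$.

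First I would decompose $\pp_{\Theta_6}(a, a')$ into its constituent edges $v_0v_1, v_1v_2, \ldots, v_{m-1}v_m$ with $v_0 = a$ and $v_m = a'$. The key observation is that, because edge lengths are non-negative, each individual $|v_iv_{i+1}|$ is at most the total path length $|\pp_{\Theta_6}(a, a')|$, which by hypothesis is strictly less than $|ab|$. Every $v_iv_{i+1}$ is an edge of $\Theta_6$ by the definition of $\pp_{\Theta_6}$, so the assumption of the lemma applies and yields $|\pp_{\Theta\Theta_k}(v_i, v_{i+1})| \le t\cdot |v_iv_{i+1}|$. Summing and concatenating produces $|\pp_{\Theta\Theta_k}(a, a')| \le t\cdot|\pp_{\Theta_6}(a, a')|$, and the same argument applied to $\pp_{\Theta_6}(b, b')$ yields $|\pp_{\Theta\Theta_k}(b', b)| \le t\cdot|\pp_{\Theta_6}(b', b)|$. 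Since $\pp_{\Theta\Theta_k}(a, b)$ is defined as the concatenation $\pp_{\Theta\Theta_k}(a, a') \oplus a'b' \oplus \pp_{\Theta\Theta_k}(b', b)$, adding $|a'b'|$ for the middle edge immediately yields the first claimed inequality.

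For the second part I would rearrange the target bound $|\pp_{\Theta\Theta_k}(a, b)| \le t\cdot|ab|$ into the equivalent form $|a'b'| \le t\cdot (|ab| - |\pp_{\Theta_6}(a, a')| - |\pp_{\Theta_6}(b', b)|)$. Since the parenthesized quantity is positive by hypothesis, a sufficient condition is $t \ge |a'b'|/(|ab| - |\pp_{\Theta_6}(a, a')| - |\pp_{\Theta_6}(b', b)|)$. Applying the uniform bound $|a'b'| \le |ab|/\cos(\theta/2)$ from Lemma~\ref{lem:abba} replaces the numerator and gives precisely the sufficient condition~(\ref{eq:t}).

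The only real ``obstacle'' is conceptual rather than technical: one must notice that the strict inequality $|\pp_{\Theta_6}(a, a')| < |ab|$ in the hypothesis is exactly what permits the inductive bound to be invoked on every individual edge of the sub-path, since each such edge length is dominated by the whole path length. Once this is observed, the remainder is straightforward arithmetic, and no geometric input beyond the projection bound of Lemma~\ref{lem:abba} is required.
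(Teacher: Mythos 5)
Your proposal is correct and follows essentially the same route as the paper's proof: decompose the two $\Theta_6$ sub-paths into edges, note that the hypothesis $|\pp_{\Theta_6}(a,a')| < |ab|$ (resp.\ for $b,b'$) lets the inductive bound apply to each constituent edge, sum, add $|a'b'|$, and then rearrange using the bound $|a'b'| \le |ab|/\cos(\theta/2)$ from \autoref{lem:abba} to obtain condition~(\ref{eq:t}). No gaps; your write-up is in fact slightly more explicit than the paper's about the rearrangement step.
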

%\end{restatable}
\begin{proof}
Because $|\pp_{\Theta_6}(a, a')| < |ab|$, each edge on $\pp_{\Theta_6}(a, a')$ must be shorter than $ab$. This along with the lemma statement implies that, for each edge $xy$ on the path $\pp_{\Theta_6}(a, a')$, the inequality $|\pp_{\Theta\Theta_k}(x,y)| \le t\cdot|xy|$ holds. Summing up these inequalities for all edges along the path $\pp_{\Theta_6}(a, a')$ yields $|\pp_{\Theta\Theta_k}(a, a')| \le t\cdot |\pp_{\Theta_6}(a, a')|$.  Similar arguments show that $|\pp_{\Theta\Theta_k}(b, b')| \le t\cdot |\pp_{\Theta_6}(b, b')|$. Thus the 
%upper bound on $|\pp_{\Theta\Theta_k}(a, b)|$ 
first inequality stated by this lemma holds. 
%These together with the upper bound on $|a'b'|$ from~\autoref{lem:abba} settle the first inequality of this lemma. 
Using the upper bound on $|a'b'|$ from~\autoref{lem:abba}, and the assumption that 
%Under the assumption that 
$|ab|-|\pp_{\Theta_6}(a, a') - |\pp_{\Theta_6}(b',b)| > 0$, this inequality can be easily reorganized into $|\pp_{\Theta\Theta_k}(a, b)| \le t\cdot|ab|$ for any real value $t$ that satisfies~(\ref{eq:t}).
{\hfill\ABox}\end{proof}

\section{$\Theta\Theta_{6k'}$ is a Spanner, for $k' \ge 4$}
\label{sec:main}
This section presents our main result, which shows that $\Theta\Theta_k$ is a spanner, provided that $k = 6k'$ and $k' \ge 5$ (and so $\theta \le \pi/15)$. In particular, we show that for each edge $ab \in \Theta_6$, there is a path in $\Theta\Theta_{6k'}$ no longer than $8.38|ab|$. This, combined with the result of~\autoref{thm:theta6}, yields our main result that $\Theta\Theta_{6k'}$ is a $16.76$-spanner, for $k' \ge 5$. The spanning ratio decreases to $7.82$ for $k' \ge 6$,  which is superior to the spanning ratio of $11.67$ established in~\cite{jDR12} for $YY_{6k'}$, with $k' \ge 6$. We also show that the spanning ratio of $\Theta\Theta_{6k'}$ drops to $4.64$ for $k' \ge 8$. 

Our approach takes advantage of the fact that each edge $ab \in \Theta_6$ is embedded in an equilateral triangle $\triangle_6(a, b)$ empty of points in $S$. The restriction $k = 6k'$ is necessary in our analysis to guarantee that each cone used in constructing $\Theta_k$ and $\Theta\Theta_k$ is a subset of a cone used in constructing $\Theta_6$, therefore it inherits a large area empty of 
points in $S$. This property is crucial in establishing a ``short'' path in $\Theta\Theta_{k}$ between the endpoints of each edge in $\Theta_6$. Although we search for \emph{undirected} paths in the undirected version of $\Theta\Theta_k$, we sometimes point out the direction of an edge if significant in the context. 

\begin{theorem}
Let $k = 6k'$ be a positive integer, with $k' \ge 5$. For each edge $\arr{ab} \in \Theta_6$, a shortest path in $\Theta\Theta_k$ between $a$ and $b$ satisfies $|\pp_{\Theta\Theta_k}(a, b)| \le t \cdot |ab|$, where $t$ is a positive real with values $8.38$, $3.91$, $2.811$ and $2.32$ corresponding to $k'$ values $5$, $6$, $7$, and above $8$, respectively.
\label{thm:maintheta}
\end{theorem}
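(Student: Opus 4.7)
The plan is to prove the theorem by strong induction on the Euclidean length of edges in $\Theta_6$. Fix an edge $\arr{ab}\in\Theta_6$ and let $b',a'$ be the two points completing its canonical $\Theta$-configuration, so that $\arr{ab'}\in\Theta_k$ lies in $C_k(a,b)$ and $\arr{a'b'}\in\Theta\Theta_k$ lies in $C_k(b',a)$. The target spanner path takes the shape
\[
\pp_{\Theta\Theta_k}(a,b)\;=\;\pp_{\Theta\Theta_k}(a,a')\,\oplus\, a'b'\,\oplus\, \pp_{\Theta\Theta_k}(b',b).
\]
The middle edge is already in the graph and, by Lemma~\ref{lem:abba}, has length at most $|ab|/\cos(\theta/2)$, so everything reduces to building the two side paths from $a$ to $a'$ and from $b'$ to $b$ and feeding the resulting length bounds into Lemma~\ref{lem:pab}.

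I would construct each side path by first routing inside $\Theta_6$ (whose connectivity between the endpoints is guaranteed by Theorem~\ref{thm:theta6}) and then replacing every $\Theta_6$-edge along that route by its $\Theta\Theta_k$-detour supplied by the inductive hypothesis. This bootstrap is legitimate precisely because the ``each edge is strictly shorter than $ab$'' clause of Lemmas~\ref{lem:paa1}, \ref{lem:paasecond}, and~\ref{lem:paa5} (which holds for $\theta\le\pi/15$, i.e.\ $k'\ge 5$) places every edge of $\pp_{\Theta_6}(a,a')$ and $\pp_{\Theta_6}(b,b')$ strictly below the inductive frontier. Lemma~\ref{lem:pab} then closes the induction whenever
\[
t \;\ge\; \frac{1/\cos(\theta/2)}{1-\bigl(|\pp_{\Theta_6}(a,a')|+|\pp_{\Theta_6}(b,b')|\bigr)/|ab|}.
\]
The base case is automatic: for the shortest edge of $\Theta_6$, the ``strictly shorter'' clause forces both $\Theta_6$-side paths to be empty, giving $a=a'$ and $b=b'$, so $ab\in\Theta\Theta_k$ and the inequality is trivial.

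The core of the proof is therefore a case analysis on the position of $a'$ relative to $a$. Because $a'\in\triangle_k(b',a)$ while $b'\in C_{\{6,1\}}(a)$, a short geometric argument confines $a'$ to one of the cones $C_{\{6,2\}}(a)$, $C_{\{6,5\}}(a)$, $C_{\{6,6\}}(a)$, or to the degenerate position $a'=a$. The degenerate sub-case is handled directly via Lemma~\ref{lem:thetapath} applied to the empty canonical triangle $\triangle_6(a,b)$. Otherwise the routing is delegated to Lemma~\ref{lem:paa1} when $a'\in C_{\{6,2\}}(a)$, to Lemma~\ref{lem:paasecond} when $a'\in C_{\{6,6\}}(a)$ and the angle $\alpha$ of $ab$ above the horizontal is at most $\pi/6$, and to Lemma~\ref{lem:paa5} when $a'\in C_{\{6,5\}}(a)$ or $a'\in C_{\{6,6\}}(a)$ with $\alpha>\pi/6$.

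The main obstacle is the trigonometric optimization that determines the final constant $t$: within each case one substitutes the closed-form upper bound on $|\pp_{\Theta_6}(a,a')|+|\pp_{\Theta_6}(b,b')|$ supplied by the relevant lemma (a function of $\alpha,\beta,\gamma$) into the threshold formula above and maximizes over the admissible parameter ranges; the overall $t$ is the maximum across cases. I expect the tight sub-case to be the one covered by Lemma~\ref{lem:paasecond}, since its right-hand side contains a term whose sign flips at $\theta=\pi/15$, which is exactly where the hypothesis $k'\ge 5$ becomes active. The dependence of the resulting $t$ on $\theta=\pi/(3k')$ then yields the four announced numerical values $8.38$, $3.91$, $2.811$, $2.32$ as $k'$ passes through $5$, $6$, $7$, and $\ge 8$, which combined with Theorem~\ref{thm:theta6} gives the $16.76$-spanning ratio headlined in the abstract.
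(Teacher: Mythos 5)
Your proposal follows essentially the same route as the paper: induction on edge length in $\Theta_6$, the decomposition $\pp_{\Theta\Theta_k}(a,a')\oplus a'b'\oplus\pp_{\Theta\Theta_k}(b',b)$, the case analysis on which of $C_{\{6,2\}}(a)$, $C_{\{6,5\}}(a)$, $C_{\{6,6\}}(a)$ contains $a'$ delegated to Lemmas~\ref{lem:paa1}--\ref{lem:paa5}, and closure via Lemma~\ref{lem:pab}. The only inaccuracy is your guess about the tight sub-case: in the paper's optimization the maximum $t=8.376$ arises from Lemma~\ref{lem:paa1} (the case $a'\in C_{\{6,2\}}(a)$ with $\beta\le\pi/6$), and the $\theta\le\pi/15$ threshold is forced by Lemma~\ref{lem:paa5} (via $8\sin(\theta/2)<1$), not by Lemma~\ref{lem:paasecond} --- but this does not affect the validity of your approach.
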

\begin{proof}
Recall that $\theta = 2\pi/k$, so in the context of this theorem $\theta \le \pi/15$. Throughout this proof will refer to the value $t$ from the theorem statement as the \emph{stretch} factor, with the understanding that it measures the ``stretch'' in $\Theta\Theta_k$ of an edge $ab \in \Theta_6$, and to be distinguished from the spanning ratio of $\Theta\Theta_k$ (which by~\autoref{thm:theta6} is at most 2$t$). 

The proof is by induction on the Euclidean length of the edges in $\Theta_6$. 
The base case corresponds to a shortest edge $\arr{ab} \in \Theta_6$.
In this case we show that $\arr{ab} \in \Theta_{k}$ and $\arr{ab} \in \Theta\Theta_{k}$. Assume to the contrary that
$\arr{ab} \not\in \Theta_{k}$ and let $\arr{ab'} \in \Theta_{k}$  %with $|ab'| \le |ab|/\cos(\theta/2)$ (cf.~\autoref{lem:abba}), 
be the edge that lies in $C_k(a,b)$. 
~\autoref{lem:paa1} does not impose any restrictions on the relative position of the $b$ and $b'$, therefore the result that 
each edge on $\pp_{\Theta_6}(b, b')$ is strictly shorter than $ab$ applies in this context. 
This contradicts our assumption that $ab$ is a shortest edge in $\Theta_6$. This shows that $\arr{ab} \in \Theta_k$. 
%Assume now that $ab \not\in \Theta\Theta_{k}$, and let $\arr{a'b} \in \Theta\Theta_{k}$ be the edge that lies in $C_k(b,a)$. If $a'$ lies above $a$, then $a' \in C_{\{6,2\}}(a)$, and by~\autoref{lem:paa} the side of $\triangle_6(a',a)$ is strictly smaller than $|ab|$. This implies that there is an edge in $\Theta_6$ that lies in $C_6(a',a)$ and is strictly shorter than $ab$, a contradiction. If $a'$ lies below $a$, then $a' \in C_{\{6,5\}}(a) \cup C_{\{6,6\}}(a)$. By~\autoref{lem:paa}, the side of $\triangle_6(a',a)$ is strictly smaller than $|ab|$. Thus there is an edge in $\Theta_6$ that lies in $C_6(a',a)$ and is strictly shorter than $ab$, a contradiction. 
Similar arguments, used in conjunction with Lemmas~\ref{lem:paa1},~\ref{lem:paasecond} and~\ref{lem:paa5} (which distinguish between different locations of $a'$ relative to $a$), show that $\arr{ab} \in \Theta\Theta_{k}$.

Our inductive hypothesis states that the theorem holds for all edges in $\Theta_6$ of length strictly lower than some fixed value $\delta > 0$. To prove the inductive step, pick a shortest edge $\arr{ab} \in \Theta_6$ of length $\delta$ or higher, and find a ``short'' path $\pp_{\Theta\Theta_k}(a, b)$ that satisfies the conditions of the theorem. Let $a'$ and $b'$ be the other two points in $S$ which, along with $a$ and $b$, complete a canonical $\Theta$-configuration: $\arr{ab'} \in \Theta_k$ lies in $C_k(a, b)$, and 
$\arr{a'b'} \in \Theta\Theta_k$ lies in $C_k(b', a)$. Refer to~\autoref{fig:abba}a. Also recall our general assumptions that in a canonical $\Theta$-configuration $ab \in C_{\{6, 1\}}(a)$, and the bisector of $C_k(a,b)$ aligns with, or lies below, the bisector of $\triangle_6(a, b)$. 
The locus of $b'$ is $\triangle_k(a,b) \setminus \triangle_6(a, b)$, which is an area completely inside $C_{\{6, 1\}}(a)$. 
The locus of $a'$ is $\triangle_k(b',a) \setminus \triangle_6(a, b)$, which is an area that may overlap two or three of the cones $C_{\{6, 2\}}(a)$, $C_{\{6, 5\}}(a)$ and $C_{\{6, 6\}}(a)$.  Note that $a'$ may not lie in $C_{\{6, 3\}}(a)$, due to our assumption that the bisector of $\triangle_k(a, b)$ is no higher than the bisector of $\triangle_6(a, b)$. 

Our intent is to use the result of~\autoref{lem:pab} to establish the existence of a path between $a$ and $b$ of length at most $t\cdot|ab|$, for some fixed real constant $t > 1$. The two key ingredients needed by~\autoref{lem:pab} are ``short'' paths in $\Theta_6$ between $a$ and $a'$, and between $b$ and $b'$. 
%Lemmas~\ref{lem:paa1} and~\ref{lem:paasecond} establish upper bounds on $|\pp_{\Theta_6}(a, a')|+|\pp_{\Theta_6}(b, b')|$, depending on the relative position of $a$ and $a'$. 
We discuss three cases,  depending on whether $a'$ lies in $C_{\{6,2\}}(a)$, $C_{\{6,5\}}(a)$ or $C_{\{6,6\}}(a)$. The case 
$a' \in C_{\{6,5\}}(a)$ is the simplest, so we will save it for last. 
Let $\alpha$, $\beta$ and $\gamma$ be the angles formed by the horizontal through $a$ with $ab$, $ab'$, and the lower ray of $C_k(a, b)$, respectively. 

\paragraph{Case $a' \in C_{\{6,2\}}(a)$.} This case is depicted in~\autoref{fig:casebb}a. By \autoref{lem:paa1}, we have 
\begin{equation}
\label{eq:paa1bound}
 |\pp_{\Theta_6}(a, a')| + |\pp_{\Theta_6}(b, b')| \le (|ab|+|a'b'|)\cdot T(\gamma) - 2|ab'|\cdot T(\beta) 
\end{equation}
where $T$ is as defined in~(\ref{eq:T}). 
Notice the restrictions on the angles $\beta$ and $\gamma$:
\begin{eqnarray}
\nonumber 0  & \le \gamma \le  & \pi/6-\theta/2 \\
\gamma & \le \beta \le &\gamma+\theta 
\label{eq:gamma}
\end{eqnarray}
The upper bound on $\gamma$ is due to our assumption that the bisector of $\triangle_k(a,b)$ is no higher than the bisector of $\triangle_6(a,b)$. The bounds on $\beta$ follow immediately from the definitions of $\gamma$ and $\beta$. 
Next we determine a maximum for the quantity on the right hand side of~(\ref{eq:paa1bound}).  
We consider two situations, depending on ranges of $\beta$, which affect the sign of $T(\beta)$. Observe that $T(\gamma)$ is always positive, since $\pi/3-\gamma > \gamma$ for any $\gamma < \pi/6$. 

Assume first that $\beta \le \pi/6$, so $ab'$ is no higher than the bisector of $\triangle_6(a, b)$. 
In this case $\beta \le \pi/3-\beta$ and $\sin\beta \le \sin(\pi/3-\beta)$, therefore $T(\beta)$ is positive. Substituting in~(\ref{eq:paa1bound}) the upper bound on $|a'b'|$ and the lower bound on $|ab'|$ from~\autoref{lem:abba} yields
\begin{equation*}
% \pp_\theta(a, a') \le  |ab|\left(\cos(\theta/2)\cdot\frac{\sin\alpha-\sin(\pi/3-\alpha)}{\sin(\pi/3)} + \frac{\sin(\pi/3-\alpha+\beta)-\sin(\alpha-\beta)}{\cos(\theta/2)\sin(\pi/3)}\right)
\frac{ |\pp_{\Theta_6}(a, a')| +  |\pp_{\Theta_6}(b, b')|}{|ab|} \le T(\gamma) + \frac{T(\gamma)}{\cos(\theta/2)} - 2T(\beta)\cdot\frac{\sin(\pi/3+\gamma)}{\sin(\pi/3+\beta)}
 \label{eq:paa1bound1}
\end{equation*}
Let $X(\theta, \gamma, \beta)$ denote the quantity on the right hand side of the inequality above. %~(\ref{eq:paa1bound1}) 
Note that $X(\theta, \gamma, \beta)$ increases as $\theta$ increases, therefore $X(\theta, \gamma, \beta) \le X(\pi/15, \gamma, \beta)$ for $\theta \le \pi/15$. ~\autoref{fig:paa1plot}a shows how  $X(\theta, \gamma, \beta)$ varies with $\gamma \in [0, \pi/6-\theta/2]$ and $\beta \in (\gamma, \min\{\pi/6, \gamma+\theta\}]$, for fixed $\theta = \pi/15$. 
%
%%%%%%%%%%%%%%%%%%%%%%%%%%%%%%%%%Figure Begin
\begin{figure}[hptb]
\centering
\begin{tabular}{c@{\hspace{0.1\linewidth}}c}
\includegraphics[width=0.4\linewidth]{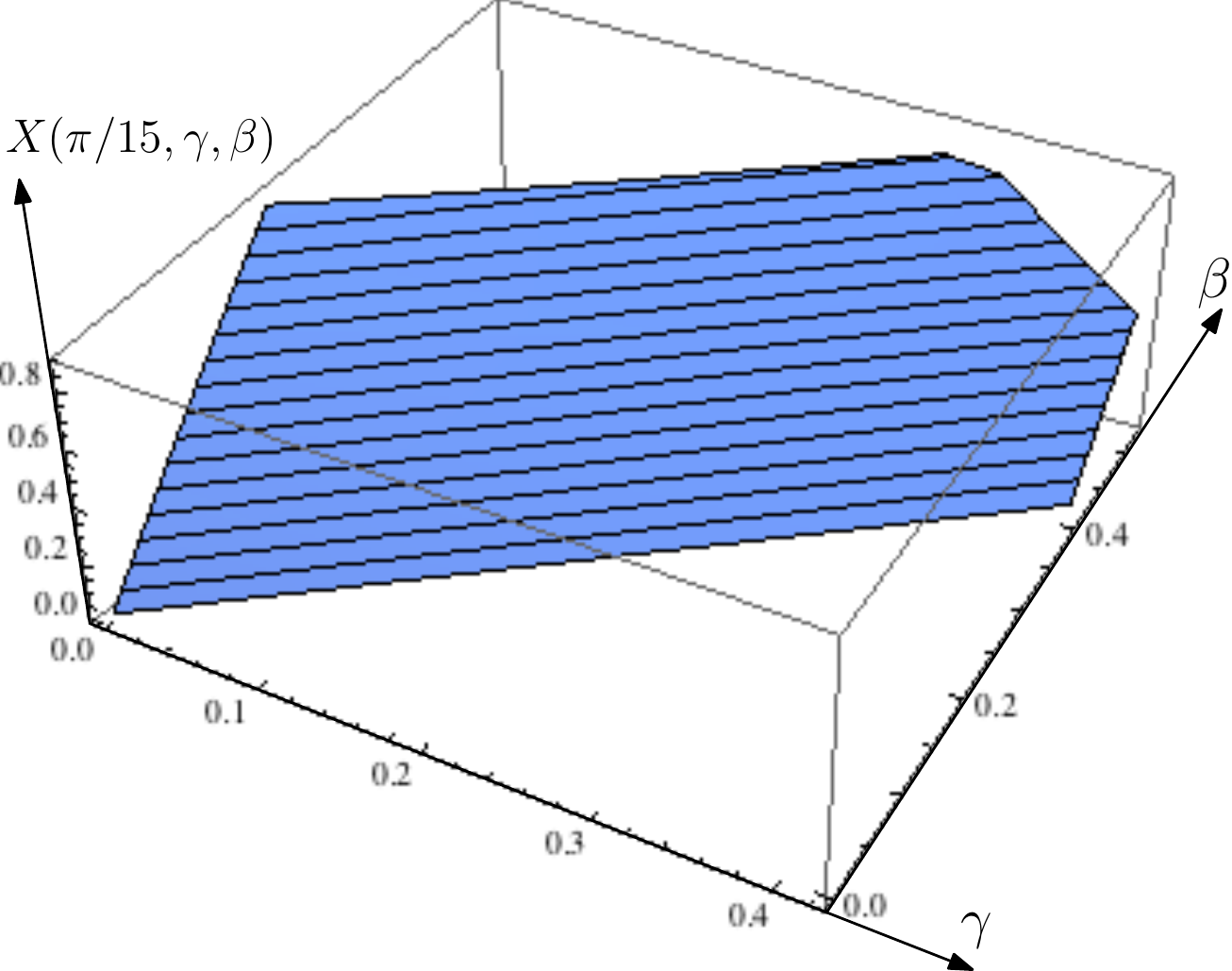} &
\includegraphics[width=0.4\linewidth]{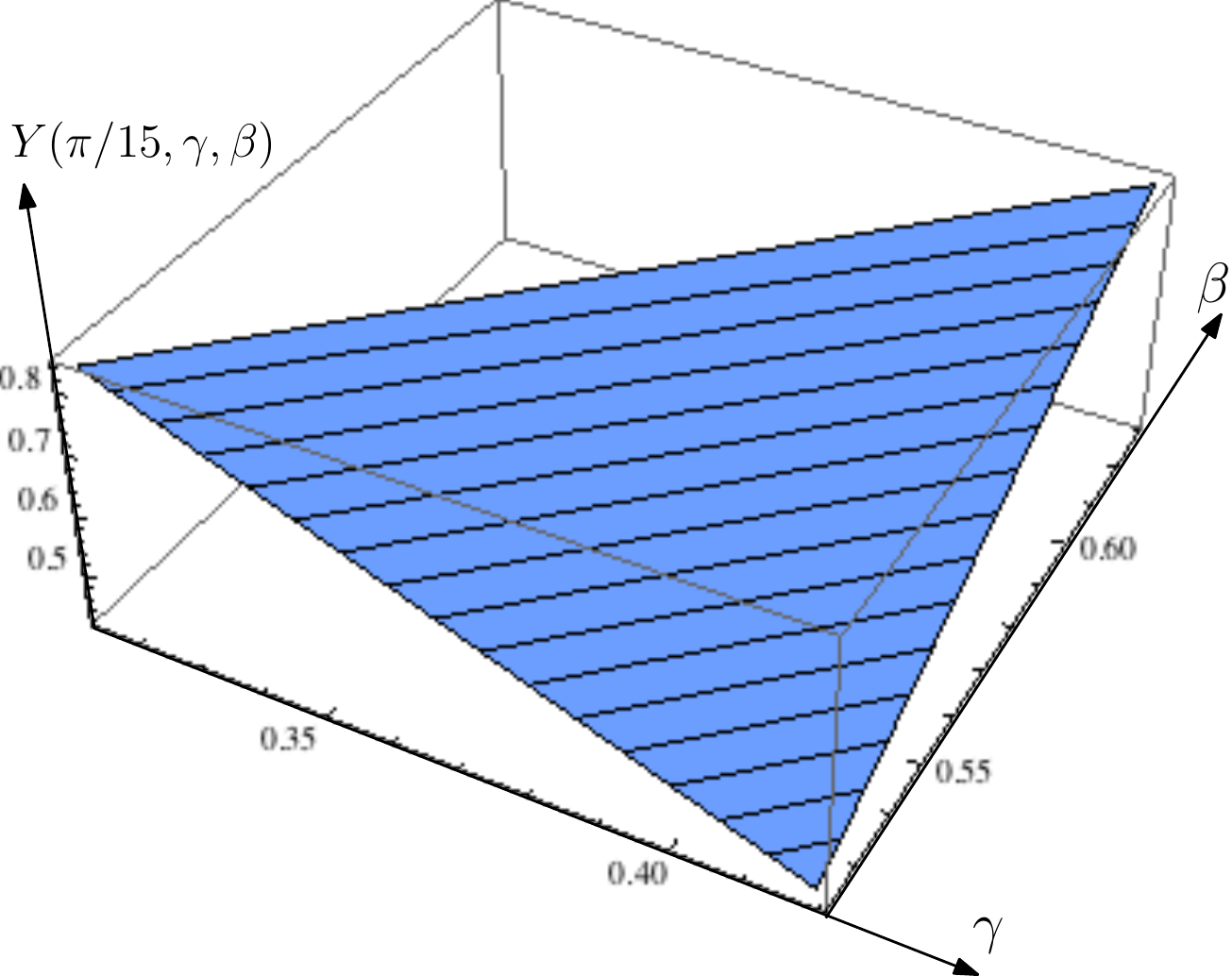} \\
(a) & (b) 
\end{tabular}
\caption{Case $a' \in C_{\{6,2\}}(a)$: upper bound on $|\pp_{\Theta_6}(a, a')|+ |\pp_{\Theta_6}(b, b')|$ for $\theta = \pi/15$ and $\gamma \in [0, \pi/6-\theta/2]$ (a) $\alpha \in (\gamma, \min\{\pi/6, \gamma+\theta\}]$ (b) $\alpha \in (\pi/6, \min\{\pi/6+\theta/2, \gamma+\theta\}]$}
\label{fig:paa1plot}
\end{figure}
%%%%%%%%%%%%%%%%%%%%%%%%%%%%%%%%%Figure End
%
It can be verified that 
%\begin{equation}
$|\pp_{\Theta_6}(a, a')|+  |\pp_{\Theta_6}(b, b')| < 0.88 |ab|$, for any $0 < \theta \le \pi/15$. This along with~\autoref{lem:pab} yields a stretch factor $t = 8.3760$ for the path in $\Theta\Theta_{k}$ between $a$ and $b$. 
The stretch factor $t$ decreases with $\theta$ as shown in the second column of Table~\ref{tab:paa1bound}. 

\begin{table}[hptb]
\begin{center}
\begin{tabular} {|c|c|c|}
\hline
\multirow{3}{*}{$\theta$} & \multicolumn{2}{|c|}{Case $a' \in C_{\{6,2\}}(a)$: stretch factor $t$ from~\autoref{lem:pab}} \\
\cline{2-3}
& ~~~~~~~~~$0 < \beta \le \pi/6$~~~~~~~~ & $\pi/6 < \beta \le \pi/6+\theta/2$ \\
\hline
$\pi/15$ & $8.3760$ & $6.2720$ \\
$\pi/18$ & $3.9058$ & $3.3377$ \\
$\pi/21$ & $2.8109$ & $2.5014$ \\
$\pi/24$ & $2.3159$ & $2.1057$ \\
\hline
\end{tabular}
\end{center}
\caption{Case $a' \in C_{\{6,2\}}(a)$: %upper bound on $\pp_{\Theta_6}(a, a')$ and 
real constant $t$ from~\autoref{lem:pab} for various $\theta$ values.}
\label{tab:paa1bound}
\end{table}

Assume now that $\pi/6 < \beta \le \pi/6+\theta/2$, so $ab'$ lies above the bisector of $\triangle_k(a, b)$. 
In this case $T(\beta)$ is negative, and by~(\ref{eq:gamma}) we have $\gamma \ge \pi/6-\theta$. Substituting in~(\ref{eq:paa1bound}) the upper bound on $|ab'|$ and $|a'b'|$ from~\autoref{lem:abba} yields
\begin{equation}
 \frac{ |\pp_{\Theta_6}(a, a')| +  |\pp_{\Theta_6}(b, b')|}{|ab|} \le  T(\gamma) + \frac{T(\gamma)-2T(\beta)}{\cos(\theta/2)}
\label{eq:paa1bound2}
\end{equation}
Let $Y(\theta, \gamma, \beta)$ denote the quantity on the right hand side of %~(\ref{eq:paa1bound2}). 
the inequality above.  Because $T(\gamma)$ is positive and $T(\beta)$ is negative, $T(\gamma)-2T(\beta)$ is positive and therefore  $Y(\theta, \gamma, \beta)$ increases as $\theta$ increases. It follows that $Y(\theta, \gamma, \beta) \le Y(\pi/15, \gamma, \beta)$ for $\theta \le \pi/15$. ~\autoref{fig:paa1plot}b shows how $Y(\theta, \gamma, \beta)$ varies with $\gamma \in [\pi/6-\theta, \pi/6-\theta/2]$ and $\beta \in (\pi/6, \gamma+\theta]$, for $\theta = \pi/15$. It can be verified that 
$|\pp_{\Theta_6}(a, a')|+  |\pp_{\Theta_6}(b, b')|  < 0.8397 |ab|$, for any $0 < \theta \le \pi/15$.
This along with~\autoref{lem:pab} yields a stretch factor $t = 6.2720$ for the path in $\Theta\Theta_{k}$ between $a$ and $b$. 
The stretch factor $t$ decreases with $\theta$ as shown in the third column of Table~\ref{tab:paa1bound}. 

%\noindent
%Inequalities~(\ref{eq:paa1-ub1}),~(\ref{eq:paa1-ub2}) and~(\ref{eq:paa1-ub3}), along with the upper bound on $\pp_{\Theta_6}(b, b')$ from~(\ref{eq:pbb-ub}), place us in the context of Lemma~\ref{lem:pab}, with $|ab|-|\pp_{\Theta_6}(a, a') - |\pp_{\Theta_6}(b',b)| > |ab|(1-0.4143-0.5267) = 0.059|ab| > 0$. Thus the inequality $|\pp_{\Theta\Theta_k}(a, b)| \le t\cdot |ab|$ holds for any $t$ that satisfies~(\ref{eq:t}). In particular, for $\theta = \pi/12$, we obtain a spanning ratio  $t = 17.1$. The spanning ratio drops down  to $4.17$  for $\theta = \pi/15$, then further down to $2.76$ for $\theta = \pi/18$, and under $2$ for $\theta \le \pi/24$, as summarized in the rightmost column of Table~\ref{tab:paa1bound}. 

%\vspace{-3em}
\paragraph{Case $a' \in C_{\{6,6\}}$.} This case is depicted in~\autoref{fig:casebb}b. We discuss two situations, depending on whether $ab$ lies above or below the bisector of $\triangle_6(a, b)$. Assume first that $ab$ is no higher than the bisector of $\triangle_6(a, b)$, so $\alpha \le \pi/6$. Thus we are in the context of~\autoref{lem:paasecond}, which gives us an upper bound  
$|\pp_{\Theta_6}(a, a')| + |\pp_{\Theta_6}(b, b')|  \le |ab| - |a'b'|\cdot Z(\theta, \alpha)$, where 
\begin{equation*}
Z(\theta,\alpha)  = \frac{\sin(\pi/3-\theta-\alpha)-\sin\theta}{\sin(\pi/3-\alpha)}
\end{equation*}
Note that $Z(\theta,\alpha)$ decreases as $\theta$ increases, therefore $Z(\theta,\alpha) \ge Z(\pi/15, \alpha)$ for any $\theta \le \pi/15$. It can be verified that $Z(\theta,\alpha) \ge m = 0.2022$, for any $\theta \le \pi/15$. 
%Fig.~\ref{fig:paa2plot}a shows how $Z(\theta, \alpha)$ varies with $\alpha \in [0, \pi/6]$, for fixed $\theta = \pi/15$. 
By~\autoref{lem:pab}, we have 
\[
\pp_{\Theta\Theta_k}(a, b) \le t|ab|-t|a'b'|\cdot Z(\theta,\alpha) +|a'b'|
\] 
Simple calculations show that the right hand side of the inequality above does not exceed $t|ab|$ for any $t \ge 4.945 \ge 1/m$. This bound decreases with $\theta$ as shown in the second column of Table~\ref{tab:paa2bound}. 

Assume now that $ab$ lies above the bisector of $\triangle_6(a, b)$, so $\alpha > \pi/6$. Intuitively, this forces $a$ and $a'$ to lie close to each other (for sufficiently small $\theta$ values), and similarly for $b$ and $b'$, so we can work with somewhat looser upper bounds without exceeding the spanning ratio established so far. Our context matches the context of~\autoref{lem:paa5}, which tells us that $|\pp_{\Theta_6}(a, a')| + |\pp_{\Theta_6}(b, b')| \le X(\theta) = 8|ab|\sin(\theta/2)$. The bound $X(\theta)$ increases with $\theta$, therefore $X(\theta) \le X(\pi/15) \le 0.8363$.  This together with~\autoref{lem:pab} yields 
$|\pp_{\Theta\Theta_k}(a, b)|  < t \cdot |ab|$ for any $t \ge 6.1397$. 
This bound decreases with $\theta$ as shown in the third column of Table~\ref{tab:paa2bound}. 

%%%%%%%%%%%%%%%%%%%%%%%%%%%%%%%%%Figure Begin
%\begin{figure}[hptb]
%\centering
%\begin{tabular}{c@{\hspace{0.01\linewidth}}c}
%\includegraphics[width=0.48\linewidth]{} &
%\includegraphics[width=0.48\linewidth]{} \\
%(a) & (b) 
%\end{tabular}
%\caption{Case $a' \in C_{\{6,5\}}(a)\cup C_{\{6,6\}}(a)$: upper bound on $|\pp_{\Theta_6}(a, a')|+ |\pp_{\Theta_6}(b, b')|$ for $\theta = \pi/15$ (a) $\alpha \in [0, \pi/6]$ (b) $\alpha \in (pi/6, \pi/6+\theta/2]$.} 
%\label{fig:paa2plot}
%\end{figure}
%%%%%%%%%%%%%%%%%%%%%%%%%%%%%%%%%Figure End

\begin{table}[hptb]
\begin{center}
\begin{tabular} {|c|c|c|}
\hline
\multirow{3}{*}{$\theta$} & \multicolumn{2}{|c|}{Case $a' \in C_{\{6,6\}}(a)$: stretch factor $t$ from~\autoref{lem:pab}} \\
\cline{2-3}
%& $0 < \alpha \le \pi/6$ & $\pi/6 \le \alpha \le \pi/6+\min\{\beta,\theta/2\}$ & $\pi/6 + \beta < \alpha \le \pi/6+\theta/2$ & ratio $t$ \\
& ~~~~~~~~$0 \le \alpha \le \pi/6$~~~~~~~~~~~ & $\pi/6 < \alpha \le \pi/6+\theta/2$ \\
\hline
$\pi/15$ & $4.9454$ & $6.1397$ \\
$\pi/18$ & $2.9697$ & $3.3157$ \\
$\pi/21$ & $2.3117$ & $2.4936$ \\
$\pi/24$ & $1.9829$ & $2.1020$ \\
\hline
\end{tabular}
\end{center}
\caption{Case $a' \in C_{\{6,6\}}(a)$: real constant $t$ from~\autoref{lem:pab} for various $\theta$ values.}
\label{tab:paa2bound}
\end{table}
\paragraph{Case $a' \in C_{\{6,5\}}$.} The bound on $|\pp_{\Theta_6}(a, a')| + |\pp_{\Theta_6}(b, b')|$ provided by~\autoref{lem:paa5} applies here as well, therefore the analysis for this case is identical to the one for the previous case (with $b' \in C_{\{6,6\}}$ and $ab$ above the bisector of $\triangle_6(a, b)$), yielding the spanning ratios listed in the third column of Table~\ref{tab:paa2bound}. 

To derive the results listed in Tables~\ref{tab:paa1bound} and~\ref{tab:paa2bound}, we worked with a quadruplet of \emph{distinct} points $a,b,a',b'$ in a  $\Theta$-configuration. The cases where $a$ and $a'$ coincide, or $b$ and $b'$ coincide, are special instances of this general case and yield lower stretch factors. 
The results listed in Tables~\ref{tab:paa1bound} and~\ref{tab:paa2bound} 
indicate that the stretch factor is highest when $a'$ lies above $a$ and $ab'$ is below the bisector of $\triangle_6(a,b)$. The largest stretch factor value is $t = 8.376$ for $\theta = \pi/15$, and it drops to $3.91$, $2.82$ and $2.32$ for $\theta$ values 
$\pi/18$, $\pi/21$ and $\pi/24$, respectively. This completes the proof.
{\hfill\ABox}\end{proof}
Combined with the result of~\autoref{thm:theta6}, the result of~\autoref{thm:maintheta} yields the main result of this paper, stated by~\autoref{thm:main} below. 

\begin{theorem} 
\label{thm:main}
The graph $\Theta\Theta_k$, with $k = 6k'$ and $k' \ge 5$, is a $16.76$-spanner. The spanning ratio decreases to $7.82$, $5.63$ and $4.64$ as $k'$ increases to $6$, $7$, and above $8$, respectively.
\end{theorem}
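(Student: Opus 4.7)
The plan is to prove Theorem~\ref{thm:main} as an essentially immediate corollary of Theorem~\ref{thm:theta6} and Theorem~\ref{thm:maintheta}. The idea is a two-level substitution argument: Theorem~\ref{thm:theta6} already supplies, for any pair $a,b \in S$, a path in $\Theta_6$ of length at most $2|ab|$, and Theorem~\ref{thm:maintheta} tells us how to replace each edge of this path by a path in $\Theta\Theta_k$ while incurring only a bounded blow-up factor $t$. Composing the two bounds yields a spanning ratio of $2t$ for $\Theta\Theta_k$.

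In more detail, I would proceed as follows. Fix arbitrary points $a,b \in S$, and apply Theorem~\ref{thm:theta6} to obtain a path $\pp_{\Theta_6}(a,b) = v_0 v_1 \ldots v_m$ in $\Theta_6$, where $v_0 = a$, $v_m = b$, and $\sum_{i=1}^m |v_{i-1}v_i| \le 2|ab|$. Each edge $\arr{v_{i-1}v_i}$ (or its reverse) belongs to $\Theta_6$, so Theorem~\ref{thm:maintheta} applies to it and yields a path in $\Theta\Theta_k$ between $v_{i-1}$ and $v_i$ of length at most $t \cdot |v_{i-1}v_i|$, where $t \in \{8.38, 3.91, 2.811, 2.32\}$ for $k' \in \{5, 6, 7, \ge 8\}$ respectively. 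Concatenating these $m$ paths produces a walk in $\Theta\Theta_k$ from $a$ to $b$, which contains a path no longer than the walk itself, of total length at most
\begin{equation*}
\sum_{i=1}^m t \cdot |v_{i-1}v_i| \;=\; t \cdot |\pp_{\Theta_6}(a,b)| \;\le\; 2t \cdot |ab|.
\end{equation*}

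Substituting the four values of $t$ gives spanning ratios $2 \cdot 8.38 = 16.76$, $2 \cdot 3.91 = 7.82$, $2 \cdot 2.811 = 5.622 \le 5.63$, and $2 \cdot 2.32 = 4.64$ for $k' = 5, 6, 7$ and $k' \ge 8$ respectively, which are exactly the bounds claimed by the theorem. There is no real obstacle here: the argument is essentially bookkeeping, since the heavy lifting is done in Theorem~\ref{thm:maintheta}. The only minor subtlety worth noting is that Theorem~\ref{thm:maintheta} is stated for a directed edge $\arr{ab} \in \Theta_6$, whereas we want to walk along the (undirected) $\Theta_6$-path in either direction; this is harmless because the paths produced lie in the undirected graph $\Theta\Theta_k$ and can be traversed from either endpoint.
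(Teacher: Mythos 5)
Your proposal is correct and is exactly the argument the paper intends: the paper derives Theorem~\ref{thm:main} directly by combining Theorem~\ref{thm:theta6} (the $2$-spanner bound for $\Theta_6$) with the per-edge stretch factors of Theorem~\ref{thm:maintheta}, just as you do. The edge-by-edge substitution and the arithmetic $2t$ for $t \in \{8.38, 3.91, 2.811, 2.32\}$ match the claimed ratios, so nothing is missing.
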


\section{Conclusions}
In this paper we present the first results on the spanning property of $\Theta\Theta_k$-graphs. We show that, for any integer $k' \ge 5$, the graph $\Theta\Theta_{6k'}$ is a spanner with spanning ratio $16.76$. The spanning ratio drops to $7.82$ for $k' \ge 6$, which is superior to the spanning ratio of $11.67$ established in~\cite{jDR12} for $YY_{6k'}$, with $k' \ge 6$. The framework of our analysis seems inadequate to handle all graphs $\Theta\Theta_k$, for all $k>6$, because it relies on the fact that each cone used in constructing $\Theta\Theta_k$ is a subset of a cone used in constructing $\Theta_6$. 
It is unclear whether a fundamentally new technique is  required to handle all $\Theta\Theta_k$ graphs, for $k \ge 6$. Proving or disproving that these graphs are spanners remains the main open problem in this area. 

\bibliographystyle{plain}
\bibliography{spannerbib}
%\end{document}

\newpage
\section*{Appendix: Deferred Proofs}
%\section{Deferred Proofs}
\label{sec:defer}

%\listoftheorems

\subsection{Proof of~\autoref{lem:abba}}

\abbalemma*
\begin{proof}
%Refer to~\autoref{fig:trapezoid}b. 
Let $h$ be the height of the isosceles triangle $\triangle_k(a, b)$, and let $s$ be the length of its two equal sides. They are related by $h = s\cos(\theta/2)$. Because both $ab$ and $ab'$ lie inside $\triangle_k(a, b)$, their length may not exceed $s$. Also $|ab|$ may not be lower than $h$, since $b$ is on the base of $\triangle_k(a, b)$. This implies 
$|ab| \ge h = s\cos\theta/2 \ge |ab'|\cos\theta/2$, so the upper bound on $|ab'|$ holds.  
Observe now that $\triangle_k(a, b)$ and $\triangle_k(b', a)$ are similar, and the side length of $\triangle_k(b', a)$  does not exceed $s$ (because $b'$ lies inside $\triangle_k(a,b)$). Similar arguments can then be used to establish the same upper bound on $|a'b'|$. 

%Next we settle the lower bounds on $|ab'|$ and $|a'b'|$. 
Let $i$ be the intersection point between $ab'$ and the right side of 
$\triangle_6(a, b)$. Then $|ab'| \ge |ai|$. By the Law of Sines applied on $\triangle abi$, we have 
$|ab|/\sin\ang{aib} = |ai|/\sin\ang{abi}$. Let $r$ be the lower right corner of $\triangle_6(a, b)$. Note that $\ang{aib} = \pi/3-\beta$ (as angle interior to $\triangle air$) and $\ang{abi} = \pi/3+\ang{bar} \ge \pi/3+\gamma$ (as angle exterior to $\triangle abr$). Also because $\beta \le \pi/6$, $\ang{abi}$ is acute, therefore $\sin\ang{abi} \ge \sin(\pi/3+\gamma)$. These together show that $|ai| \ge |ab|\sin(\pi/3+\gamma)/\sin(\pi/3+\alpha)$, so the lower bound on $|ab'|$ holds. This completes the proof. 
% (viewed as angle exterior to $\triangle abr$)
%
%Let $j$ be the intersection point between $a'b'$ and the left side of $\triangle_6(a, b)$. Then $|a'b'| \ge |b'j|$. Let $y$ be the intersection point between the horizontal through $b'$ and the left side of $\triangle_6(a, b)$. Observe that $\ang{yb'j} \ge \gamma$, since the upper ray of $C_k(b',a')$ is parallel to the lower ray of $C_k(a,b)$, and $b'a'$ is inside $C_k(b',a')$. It follows that $\ang{ajb'} \ge 2\pi/3+\gamma$ (as angle exterior to $\triangle jb'y$). By the Law of Sines applied on $\triangle ab'j$, we have $ |b'j|/\sin\ang{jab'} = |ab'|/\sin\ang{ajb'}$. Substituting $\sin\ang{jab'} = \sin(\pi/3-\alpha)$ and $\sin\ang{ajb'} \le \sin(2\pi/3+\gamma)$ (since $\ang{ajb'}$ is obtuse), yields $|b'j| \ge |ab'|\sin(\pi/3-\alpha)/\sin(\pi/3-\gamma)$. 
{\hfill\ABox}\end{proof}

\subsection{Proof of~\autoref{lem:paa1}}
\aalemma*
\begin{proof}
First we determine an upper bound on $|\pp_{\Theta_6}(b, b')|$. Let $c$ and $d$ be the right and left corners of $\triangle_6(b', b)$, respectively. Because $b'$ is interior to $\triangle_k(a, b)$, the perpendicular from $b'$ to the bisector of $\triangle_k(a, b)$ intersects the line segment $ab$, so  the perpendicular from $b'$ to $cd$ falls left of $b$. This implies that $|bc| \le |bd|$. 
Note that $\triangle_6(b',b)$ meets the conditions of~\autoref{lem:thetapath}, with $\triangle_6(b,d)$ empty of points in $S$, therefore   
$|\pp_{\Theta_6}(b, b')| \le |bc| + |b'c| \le |bd| + |b'd|$. 
Let the horizontal through $a$ intersect the left rays of $C_{\{6,5\}}(b)$ and $C_{\{6,5\}}(b')$ in points $x$ and $x'$, respectively. 
Then $|bd| = |ax|-|ax'|$ and $|b'd| = |b'x'|-|bx|$, so we have 
\begin{equation}
|\pp_{\Theta_6}(b, b')| \le (|ax|-|ax'|)+(|b'x'|-|bx|)
\label{eq:pbb1}
\end{equation}
We determine $|ax|$ and $|bx|$ in terms of $|ab|$ by applying the Law of Sines on $\triangle abx$: 
$|ax|/\sin\ang{abx}=|bx|/\sin\ang{bax}=|ab|/\sin\ang{axb}$. Note that $\ang{axb} = 2\pi/3$,  therefore both $\ang{bax}$ and 
$\ang{abx}$ are acute.  This along with the fact that $\ang{bax} \ge \gamma$ implies $\sin\ang{bax} \ge \sin\gamma$, and 
 $\ang{abx} = \pi/3-\ang{bax} \le \pi/3-\gamma$ implies $\sin\ang{abx} \le \sin(\pi/3-\gamma)$. Combining these inequalities together yields
\begin{equation}
|ax| \le |ab|\cdot\frac{\sin(\pi/3-\gamma)}{\sin(\pi/3)} \mbox{~~and~~} |bx| \ge |ab|\cdot\frac{\sin\gamma}{\sin(\pi/3)}
\label{eq:pbb2}
\end{equation}
Next we determine $|b'x'|$ and $|ax'|$ in terms of $|ab'|$ by applying the Law of Sines on $\triangle ab'x'$: 
$|ax'|/\sin\ang{ab'x'}=|b'x'|/\sin\ang{b'ax'}=|ab'|/\sin(\pi/3)$. Plugging in the angle values 
$\ang{b'ax'} = \beta$ and $\ang{ab'x'} = \pi/3-\beta$ yields 
\begin{equation}
|ax'| = |ab'|\cdot\frac{\sin(\pi/3-\beta)}{\sin(\pi/3)} \mbox{~~and~~} |b'x'| = |ab'|\cdot\frac{\sin\beta}{\sin(\pi/3)}
\label{eq:pbb3}
\end{equation}
Combining inequalities~(\ref{eq:pbb1}),~(\ref{eq:pbb2}) and~(\ref{eq:pbb3}) together yields 
\begin{equation}
|\pp_{\Theta_6}(b, b')| \le |ab|\cdot T(\gamma) - |ab'|\cdot T(\beta)
\label{eq:pbbfinal}
\end{equation}
Next we determine an upper bound on $|\pp_{\Theta_6}(a, a')|$.
%Note that the locus of $a'$ is the set of points in $\triangle_k(b',a) \cap C_{6,2}(a)$. 
Let $u$ be the left corner of $\triangle_6(a', a)$ (refer to~\autoref{fig:casebb}a.) By~\autoref{lem:thetapath}, $|\pp_{\Theta_6}(a', a)| \le |au| + |a'u|$.  
Let the horizontal through $b'$ intersect the left side of $\triangle_6(a, b)$ and the line supporting $a'u$ in points $y$ and $y'$, respectively. Then $|au| = |b'y'|-|b'y|$ and $|a'u| = |ay|-|a'y'|$. These together imply  
\begin{equation}
|\pp_{\Theta_6}(a, a')| \le (|ay|-|a'y'|)+(|b'y'|-|b'y|)
\label{eq:paa1}
\end{equation}
%We determine $|ay|$ and $|b'y|$ in terms of $|ab'|$ by applying the Law of Sines on $\triangle ab'y$: 
%$|ay|/\sin\ang{ab'y}=|b'y|/\sin\ang{yab'}=|ab'|/\sin\ang{ayb'}$. Plugging in the angle values
%$\ang{ab'y}=\beta$, $\ang{yab'} = \pi/3-\beta$ and $\ang{ayb'} = 2\pi/3$ yields
%\begin{equation}
%|ay| = |ab'|\cdot\frac{\sin\beta}{\sin(\pi/3)} \mbox{~~and~~} |b'y| = |ab'|\cdot\frac{\sin(\pi/3-\beta)}{\sin(\pi/3)}
%\label{eq:paa2}
%\end{equation}
%
Note that $|ay| = |b'x'|$ and $|b'y| = |ax'|$, so the bounds from~(\ref{eq:pbb3}) apply here as well. 
Next we determine $|a'y'|$ and $|b'y'|$ in terms of $|a'b'|$ by applying the Law of Sines on $\triangle a'b'y'$: 
$|a'y'|/\sin\ang{a'b'y'}=|b'y'|/\sin\ang{y'a'b'}=|a'b'|/\sin\ang{a'y'b'}$. Because the upper ray of 
$C_k(b',a')$ is parallel to the lower ray of $C_k(a, b)$,  we have 
$\ang{a'b'y'} \ge \gamma$ and $\ang{y'a'b'} = \pi/3-\ang{a'b'y'} \le \pi/3- \gamma$. 
Since both angles are acute, we get  
$\sin\ang{a'b'y'} \ge \sin\gamma$ and $\sin\ang{y'a'b'} \le \sin(\pi/3-\gamma)$. These together imply 
\begin{equation}
|a'y'| \ge |a'b'|\cdot\frac{\sin\gamma}{\sin(\pi/3)} \mbox{~~and~~} |b'y'| \le |a'b'|\cdot\frac{\sin(\pi/3-\gamma)}{\sin(\pi/3)}
\label{eq:paa2}
\end{equation}
Combining inequalities~(\ref{eq:paa1}),~(\ref{eq:pbb3}) and~(\ref{eq:paa2}) together yields 
\begin{equation*}
|\pp_{\Theta_6}(a, a')| \le |a'b'|\cdot T(\gamma) - |ab'|\cdot T(\beta)
\label{eq:paafinal}
\end{equation*}
This along with~(\ref{eq:pbbfinal}) settles the first part of the lemma. 
We now turn to the second claim of the lemma. By~\autoref{lem:thetapath}, each edge on 
$\pp_{\Theta_6}(b,b')$ is no longer than $|b'd| \le |b'x'| = |ab'|\sin\beta/\sin(\pi/3)$ (cf.~(\ref{eq:pbb3})), and  
each edge on $\pp_{\Theta_6}(a,a')$ is no longer than $|a'u| \le |ay| = |b'x'|$. 
To simplify discussion, let $A = |ab'|\sin\beta/\sin(\pi/3)$. 
It suffices to show that $A < |ab|$ in order to settle the second part of the lemma. 
Because the bisector of $\triangle_k(a, b)$ is no higher than the bisector of $\triangle_6(a, b)$, we have that $\beta \le \pi/6+\theta/2$, therefore $A < |ab'|\sin(\pi/6+\theta/2)/\sin(\pi/3)$. Substituting the upper bound on $|ab'|$ from~\autoref{lem:abba} yields 
\[
A < |ab|\frac{\sin(\pi/6+\theta/2)}{\sin(\pi/3)\cos(\theta/2)}
\]
 It can be verified that the right hand side of this inequality is strictly smaller than $|ab|$, for any $\theta \le \pi/6$. This completes the proof.
{\hfill\ABox}\end{proof}

\subsection{Proof of~\autoref{lem:paasecond}}

\aasecondlemma*

\begin{proof}
%To avoid notation overload, we reuse the points $u$, $x$ and $y$ from the previous case, but redefine them in this context as follows (refer):  
We define the following points: $c$ and $d$ are the right and left corners of $\triangle_6(b', b)$; $u$ is the left corner of $\triangle_6(a', a)$; $x$ and $e$ are the points where the right ray of $C_{\{6,2\}}(a')$ intersects $ab$ and the horizontal through $a$, respectively; $y$ is the point where the line supporting $b'd$ intersects $ab$; and $o$ is the intersection point between $ab$ and $a'b'$. Refer to~\autoref{fig:casebb}b. Arguments similar to the ones used in the proof of~\autoref{lem:paa1} show that $|\pp_{\Theta_6}(b, b')| \le |b'd| + |bd|$. This along with $|b'd| \le |b'y|$ and $|bd| \le |by|$ implies 
\begin{equation*}
|\pp_{\Theta_6}(b, b')| \le |by| + |b'y|
\label{eq:paa51}
\end{equation*}
%Observe that $\triangle_6(a',a)$ meets the conditions of \autoref{lem:thetapath} (because $a' \in C_{\{6,6\}}(a')$), so we have  
By~\autoref{lem:thetapath} we have  
$|\pp_{\Theta_6}(a, a')| \le |a'u| + |au| = |ae| + |a'e| \le |ax| + |a'x|$. This together with the inequality above and the fact that 
$|by| + |ax| = |ab| - |xy|$,  yields 
\begin{equation}
|\pp_{\Theta_6}(a, a')| + |\pp_{\Theta_6}(b, b')| \le |ab| - |xy| + (|a'x|+|b'y|)
\label{eq:paa52}
\end{equation}
Using the similarity property of $\triangle a'ox$ and $\triangle b'oy$, we derive $|xy| =  |a'b'| \cdot  |xo| / |a'o|$ and 
$|a'x|+|b'y| = |a'b'|\cdot |a'x|/|a'o|$.
Using the Law of Sines on $\triangle a'ox$, we derive 
$|xo|/|a'o| = \sin\ang{xa'o}/\sin\ang{a'xo}$ and 
$|a'x|/|a'o| = \sin\ang{a'ox}/\sin\ang{a'xo}$.
Observe that $\ang{a'ox} \le \theta$ (because the ray shooting from $b'$ towards $a$, parallel to $ab$, lies inside $C_k(b',a')$ of angle $\theta$, and $\ang{a'ox}$ is equal to the angle formed by this ray with $a'b'$), and $\ang{a'xo} =2\pi/3+\alpha$ (as angle exterior to $\triangle aex$). It follows that $\ang{xa'o} > \pi/3-\alpha-\theta$. These together imply
\begin{equation*}
|xy| > |a'b'|\cdot\frac{\sin(\pi/3-\alpha-\theta)}{\sin(\pi/3-\alpha)} \mbox{~~~and~~~}  |a'x|+|b'y| < |a'b'|\cdot\frac{\sin\theta}{\sin(\pi/3-\alpha)}
\end{equation*}
These inequalities along with~(\ref{eq:paa52}) yield
the upper bound on $|\pp_{\Theta_6}(a, a')|+|\pp_{\Theta_6}(b, b')|$ stated by this lemma.

%We now turn to the second claim of the lemma. 
%By~\autoref{lem:thetapath}, each edge of $\pp_{\Theta_6}(a,a')$ has length at most $|a'u| \le |ax| < |ab|$, and each edge of $\pp_{\Theta_6}(b,b')$ has length at most $|b'd| < |b'y|$. We show that $|b'y| < |ay| < |ab|$. Note that $\ang{ab'y} > \ang{ob'y} = \ang{xa'o} >  \pi/3-\alpha-\theta$ (established above). Simple calculations show that $\ang{b'ay} < \theta < \pi/3-\alpha-\theta < \ang{ab'y}$ for any $\theta < \pi/12$. This implies $|b'y| < |ay| < |ab|$. Since each of the paths $\pp_{\Theta_6}(a,a')$ and $\pp_{\Theta_6}(b,b')$ is strictly shorter than $ab$, each edge on these paths must be strictly shorter than $ab$. 

For the second part of the lemma, it can be verified that the term $(\sin(\pi/3-\alpha-\theta)-\sin\theta)/\sin(\pi/3-\alpha)$ is strictly positive for any $t \in (0, \pi/12]$ and $\alpha \in [0, \pi/3]$. This along with the upper bound established by this lemma shows that $|\pp_{\Theta_6}(a, a')| + |\pp_{\Theta_6}(b, b')| < |ab|$, therefore each edge on each of the paths $\pp_{\Theta_6}(a,a')$ and $\pp_{\Theta_6}(b,b')$ is strictly shorter than $ab$. This completes the proof.
{\hfill\ABox}\end{proof}

\subsection{Proof of~\autoref{lem:paa5}}

\aathirdlemma*

\begin{proof}
The conditions stated by the lemma suggest that either $a$ and $a'$ lie close to each other (if $a' \in C_{\{6,5\}}(a)$), or $b$ and $b'$ lie close to each other (if $ab$ is above the bisector of $\triangle_6(a, b)$). Intuitively, the upper bounds established for these two cases must be within a small factor of each other.  

Let $o$ be the intersection point between $ab$ and $a'b'$. By the lemma statement $b'$ lies below the horizontal through $a$, therefore the point $o$ exists.  
Observe that a ray shooting from $b'$ towards $a$, parallel to $ab$, lies inside $C_k(b',a')$ of angle $\theta$, and $\ang{aoa'}$ is equal to the angle formed by this ray with $a'b'$, therefore $\ang{aoa'} \le \theta$. By the Law of Sines applied on triangle $\triangle aoa'$, we have 
$|aa'|/\sin\ang{aoa'} = |oa'|/\sin\ang{a'ao}$. This along with~\autoref{thm:theta6} and the fact that $\ang{aoa'} \le \theta$ implies 
\begin{equation}
|\pp_{\Theta_6}(a, a')| \le 2|oa'| \cdot \frac{\sin\theta}{\sin\ang{a'ao}}
\label{eq:small1}
\end{equation}
Similarly arguments used on $\triangle bob'$ show that  
\begin{equation}
|\pp_{\Theta_6}(b, b')| \le 2|ob'| \cdot \frac{\sin\theta}{\sin\ang{b'bo}}
\label{eq:small2}
\end{equation}
Consider first the case where $a' \in C_{\{6,6\}}(a)$, and $ab$ is above the bisector of $\triangle_6(a, b)$. In this case $\ang{b'bo} > \pi/2$ and $\ang{a'ao} > \pi/6$ (since $a'$ is below the horizontal through $a$). By the definition of a $\Theta$-configuration, the bisector of $C_k(a, b)$ lies below the bisector of $\triangle_6(a, b)$, therefore the angle formed by $ab$ with the bisector of $\triangle_6(a, b)$ is at most $\theta/2$. It follows that 
$\ang{a'ao} < \pi/2+\theta/2 < \pi/2 + \pi/6$ and similarly $\ang{b'bo} < \pi/2+\pi/6$. These together show that 
$\sin\ang{a'ao} > \sin(\pi/6)$ and $\sin\ang{b'bo} > \sin(\pi/6)$, which along with~(\ref{eq:small1}) and~(\ref{eq:small2}) yield 
\begin{equation}
\label{eq:small3}
|\pp_{\Theta_6}(a, a')| +|\pp_{\Theta_6}(b, b')| \le 2|a'b'|\cdot \frac{\sin\theta}{\sin(\pi/6)} = 4|a'b'|\sin\theta
\end{equation}
Substituting  the upper bound on $|a'b'|$ from~\autoref{lem:abba} results in $|\pp_{\Theta_6}(a, a')| +|\pp_{\Theta_6}(b, b')| \le 4|ab| \sin\theta/\cos(\theta/2) = 8|ab|\sin(\theta/2)$. Thus the upper bound claimed by the lemma holds for this case. 

Assume now that $b' \in C_{\{6,5\}}(a)$. In this case $\ang{a'ao} \ge \pi/3$, and similarly $\ang{b'bo} \ge \pi/3$ (because $b'$ lies exterior to $\triangle_6(a,b)$ and above $b$). Since neither of these angles can extend as far as $\pi/2+\pi/3$, the inequalities 
$\sin\ang{a'ao} \ge \sin(\pi/3)$ and $\sin\ang{b'bo} \ge \sin(\pi/3)$ hold. These along with~(\ref{eq:small1}) and~(\ref{eq:small2}) yield 
\begin{equation*}
|\pp_{\Theta_6}(a, a')| +|\pp_{\Theta_6}(b, b')| \le 2|a'b'|\cdot \frac{\sin\theta}{\sin(\pi/3)} < 2|a'b'|\cdot \frac{\sin\theta}{\sin(\pi/6)}
\label{eq:small4}
\end{equation*}
This shows that the bound from~(\ref{eq:small3}) established for the previous case applies in this case as well. This settles the first part of the lemma. For the second part, simple calculations show that $8\sin(\theta/2) < 1$ for any $\theta <= \pi/15$. This implies that $|\pp_{\Theta_6}(a, a')| + |\pp_{\Theta_6}(b, b')| < |ab|$, therefore each edge of $\pp_{\Theta_6}(a,a')$ and $\pp_{\Theta_6}(b,b')$ is strictly shorter than $ab$. This completes the proof.
{\hfill\ABox}\end{proof}
\end{document}